\documentclass[%
 pra, twocolumn,
 amsmath,amssymb,
]{revtex4-2}

\UseRawInputEncoding

\usepackage{optidef}
\usepackage{braket}
\usepackage{nicematrix}
\usepackage{float}
\usepackage{multirow}
\usepackage{tikz}
\usepackage[utf8]{inputenc}
\usepackage{xcolor}

\usepackage{amsthm,color}

\newtheorem{proposition}{Proposition}

\usepackage{booktabs}
\usepackage{subcaption}
\usepackage{hyperref}
\usepackage{cleveref}

\usepackage{amsmath}
\usepackage{graphicx}% Include figure files
\usepackage{dcolumn}% Align table columns on decimal point
\usepackage{bm}% bold math
\usepackage{ragged2e}

\begin{document}
\crefname{equation}{Eq}{Equations} % capitalize "E", no period
\crefrangelabelformat{equation}{(#3#1#4--#5#2#6)}

\title{Reachability and optimal-time certificates for quantum control}

\author{
  Younes Naceur$^{1,2}$ and Llorenç Balada Gaggioli$^{2,3}$  \\[1em]
  \normalsize{$^{1}$ ICFO -- Institut de Ciències Fotòniques,
08860 Castelldefels (Barcelona), Spain}\\
  \normalsize{$^{2}$ LAAS-CNRS - Laboratoire d’Analyse et d’Architecture des Systèmes, Centre National de la Recherche Scientifique, Toulouse, France}\\
  \normalsize{$^{3}$ Czech Technical University in Prague, Prague, Czech Republic} \\
}

\date{\today}

\begin{abstract}
Finite-time control is central to quantum technologies, yet rigorous limits on reachable targets and optimal control times remain largely unknown. We develop a framework for finite-time reachability and optimal-time certificates in constrained quantum control based on moment relaxations with implicitly time-dependent differential constraints. For fixed control horizons and control constraints, the method yields rigorous upper bounds on achievable terminal fidelities, lower bounds on the optimal control times required to reach them, and certificate gaps for benchmarking explicit control pulses.  We demonstrate the versatility of our framework in three use cases: entangled-state preparation in two and three qubits, one-qubit gate synthesis across different control geometries, and excitation transfer in an $N$-qubit $XX$ chain. Our work establishes differential moment hierarchies as a practical tool for certifying reachability limits and optimal control times in quantum control, providing hardware-aware quantum speed limits while highlighting structure exploitation as a key ingredient for scalable certification.
\end{abstract}

\maketitle

\section*{Introduction}

The development of quantum technologies relies on the ability to manipulate coherent quantum dynamics in a precise, robust, and task-oriented way \cite{Gauger2023LearningQuantumSystems,Awschalom2023QuantumInformationHardware,Preskill2018NISQ}. Across quantum computing, quantum simulation, quantum sensing, and related platforms, one must drive physical systems to prepare states, implement logic operations, suppress unwanted couplings, and transport information under realistic experimental limitations. In this sense, the central challenge is not only to engineer quantum devices, but also to shape their evolution toward prescribed objectives. Quantum control studies how quantum dynamics can be steered to perform concrete operational tasks such as state preparation, gate synthesis, and excitation transfer, and it has become a central tool across spectroscopy, magnetic resonance, quantum information processing, quantum simulation, and quantum sensing \cite{Brif2010QuantumControl,DongPetersen2010Survey,Glaser2015TrainingSchrodingersCat,Koch2022QuantumOptimalControl}. Over the last two decades, this field has developed a rich theoretical and numerical toolbox, ranging from controllability analysis and geometric descriptions of reachable unitary dynamics to highly effective gradient-based pulse-design methods for synthesizing feasible controls \cite{Altafini2002Controllability,Khaneja2005GRAPE,YuanKhaneja2005TimeOptimal,Caneva2011CRAB,Morzhin2019Krotov}. 

These advances, however, also expose a central gap that becomes especially relevant in realistic finite-time settings with hardware constraints. Controllability results often characterize what is possible in principle, while numerical optimal-control routines produce explicit candidate pulses, but neither by itself yields a rigorous statement of the best fidelity actually reachable at a fixed time horizon, nor of the minimum time needed to reach a target fidelity \cite{Koch2022QuantumOptimalControl,Wang2015Brachistochrone}. In particular, when significant control constraints are imposed, the optimization landscape can become very complicated to navigate, so the failure of a local search method cannot be taken as evidence of physical impossibility \cite{Moore2012ConstrainedLandscapes,Pechen2011Traps}. This is precisely where reachability and optimal-time certificates become essential: by supplying rigorous upper bounds on fidelity and lower bounds on the time to reach it, they allow one to distinguish true dynamical limitations from optimization failure, and hence transform finite-time quantum control from a problem of numerical exploration into one of mathematical certification.

Optimization-based methods have approached the modeling of the dynamics in different directions, but none of them provides reachability or optimal-time certificates for quantum control: in classical nonlinear control, occupation measures and moment-SOS hierarchies \cite{lasserre2001global} have been used to obtain global bounds for controlled polynomial dynamics and reachability \cite{LasserreHenrionPrieurTrelat2008Occupation,henrion2024occupation,ROA,gaggioli2026compositiontensortrainstructure}. However, these methods are commutative and therefore cannot encode the noncommutative nature of quantum evolution. In quantum optimal control, global polynomial optimization has recently been combined with polynomial approximations to mitigate the effect of local traps of constrained landscapes for optimal pulse design \cite{g4fb-xm13,BaladaGaggioli2025GateSynthesisPolyOpt,MarecekVala2020MagnusNCPO}. While these methods show that polynomial optimization is a useful tool for quantum control, they optimize over an approximate model, which does not result in rigorous no-go certificates of the original problem. Finally, noncommutative polynomial optimization \cite{npa,navascues_convergent} and its differential extensions \cite{araujo_differential} provide a way of certifying dynamical quantities in noncommutative systems. These results, however, are based on systems under evolution prescribed by a differential equation, and therefore cannot be applied to systems where the time evolution of the control variables is subject to optimization itself.

In this work, we turn this missing link into a practical certification tool for quantum control. For a given hardware model, with amplitude and possibly velocity constraints, and a finite time, our framework certifies what target fidelities are impossible to reach for any admissible pulse. The output is therefore a finite-time reachability envelope: upper bounds rule out fidelities that no admissible pulse can reach by a specific time, while lower bounds from explicit pulses show what actually can be achieved. An associated cumulative relaxation further yields certified lower bounds on the minimum control time needed to reach a target fidelity. With this, we provide a way of bounding the true performance of constrained quantum devices, and obtain rigorous no-go statements about both the fidelities reachable at a given time and the time required to reach a given fidelity. The same comparison also makes the framework an \emph{a posteriori} benchmark for explicit control pulses: given a pulse produced by optimizers such as GRAPE \cite{Khaneja2005GRAPE} or CRAB \cite{Caneva2011CRAB}, reinforcement learning, or experimental calibration, its achieved fidelity can be benchmarked against the certified finite-horizon upper bound under the same hardware constraints.

The method is based on a differential moment formulation of the controlled dynamics, leading to hierarchies of semidefinite relaxations whose values are rigorous bounds on the physical optimum. The computability of the method is based on exact structure exploitation \cite{augier_nicolas_symmetry_2024} in the controlled dynamics, in the spirit of recent advances on sparse, symmetric, and scalable noncommutative polynomial optimization \cite{Klep_2021,wang2021nctssos,wang2024groundstate,wang2026scalable}. In some examples, this reduction turns the problem into a commutative polynomial optimization problem, where these reformulations are exact but simpler descriptions of the same dynamical systems. Across state preparation, gate synthesis, and state transfer, this allows us to certify non-trivial finite-time limitations imposed by the control geometry and the physical system itself, providing rigorous hardware-aware and instance-specific quantum speed limits for constrained quantum technologies. 
%To our knowledge, this is the first moment-hierarchy framework that certifies finite-time unreachable fidelities, as well as lower bounds on optimal quantum control times.

\begin{figure*}[t]
    \centering
    \includegraphics[width=\textwidth,trim=0 5cm 0 5cm,clip]{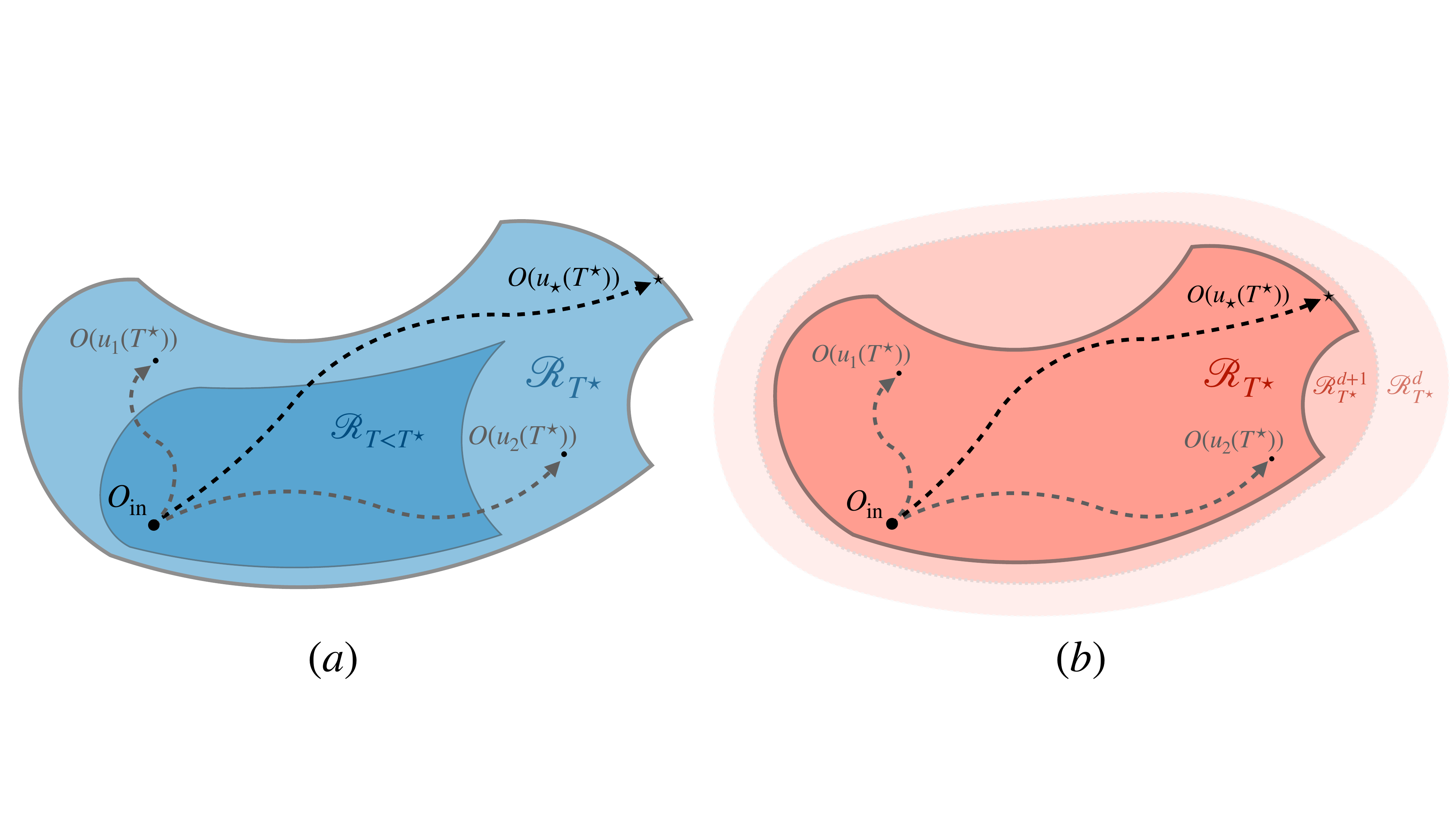}
\caption{
\textbf{Reachability certification and optimal time.}
(a) Geometry of constrained reachability in finite-time quantum control. Starting from the initial observable $O_{\mathrm{in}}$, the admissible controls generate a generally nonconvex reachable set $\mathcal{R}_T$ of terminal observables. The horizon $T^\star$ denotes the optimal threshold time at which the target observable $O_\star$ first becomes reachable, i.e., $O_\star=O(u_\star(T^\star))$ for some optimal admissible control $u_\star(T^\star)$, whereas $O_\star\notin \mathcal{R}_T$ for earlier horizons $T<T^\star$.
(b) The differential moment hierarchy constructs nested convex outer approximations
$\mathcal{R}_{T^\star}^{d+1}\subseteq \mathcal{R}_{T^\star}^{d}$ of the true reachable set $\mathcal{R}_{T^\star}$. If a target observable lies outside one of these outer approximations, it is certified to be unreachable at that horizon.
}
    \label{fig:graphical_abstract}
\end{figure*}

\section{Preliminaries}

\subsection{Quantum control}

We consider a finite-dimensional closed quantum system on a Hilbert space $\mathcal{H}$, driven by a family of controlled Hamiltonians such that
\[
H(t)=H_0+\sum_{j=1}^m u_j(t)H_j,
\]
where $H_0$ denotes the drift Hamiltonian, and the control $u(t)=(u_1(t),\dots,u_m(t))\in W^{1,\infty}([0,T];\mathbb{R}^m)$. To improve the physical relevance of the control modeling, we also consider its velocity
\[
v(t)=(v_1(t),\dots,v_m(t)), \quad \dot{u}_j(t)=v_j(t),\quad j=1,\dots,m,
\]
and impose experimental constraints on both the amplitude and the velocity via the bounds
\[
\sum_{j=1}^mu_j(t)^2\leq U_{\rm{max}}^2, \quad \sum_{j=1}^m v_j(t)^2\leq V_{\rm{max}}^2,
\]
together with the initial condition $u(0)=u_0$.

Given an initial state $\ket{\psi}\in \mathcal{H}$ with $\| \psi \| =1 $, we can describe its dynamics via the Schrödinger equation 
\begin{equation}
    \frac{d}{dt}\ket{\psi(t)}=-iH(t)\ket{\psi(t)}, \quad \ket{\psi(0)}=\ket{\psi},
\end{equation}
or equivalently by the operator equation 
\begin{equation}
    \frac{d}{dt}U(t)=-iH(t)U(t), \quad U(0)=I.
\end{equation}
Therefore, every admissible control pair $(u,v)$ and final time $T$ determine a terminal propagator $U(T)$, and a terminal state $\ket{\psi(T)}=U(T)\ket{\psi}$.

Depending on the control task under consideration, one defines a terminal figure of merit $\Phi(T) = \Phi(O(T))$. 

For the method and most results presented in the paper, the Heisenberg picture is the most natural way of describing the dynamics, for which we take an initial observable $O_{\rm{in}}\in \rm{Herm}(\mathcal{H})$ and define
\[
O(t)=U(t)^{\dag}O_{\rm{in}}U(t),
\]
which satisfies
\begin{equation}
    \frac{d}{dt}O(t)=i[H(t),O(t)], \quad O(0)=O_{\rm{in}}.
\end{equation}

Quantum control can be organized around three main axes. The first one is \textit{reachability} (or more broadly, controllability) \cite{AlbertiniDAlessandro2003,Altafini2003Markovian,AlbertiniDAlessandro2002Spin}, which asks what states, operators or observables can be attained for a fixed time horizon $T$, given initial conditions and a control system. Reachability is especially important in quantum technologies because it connects the mathematical description of a controlled quantum system with the practical capabilities of a physical platform. For a fixed time horizon $T$, we define the constrained reachable objective value $\Phi^{\star}(T):=\sup_{u} \Phi\left(O(u(T))\right)$.

The second axis is that of \textit{optimal control} \cite{Glaser2015TrainingSchrodingersCat,Koch2022QuantumOptimalControl}, the explicit construction of admissible control functions $u_j(t)$ that maximize $\Phi(T)$, driving the system as close as possible to the target objective. Once a task is known to be reachable, one still wishes to determine which admissible controls realize it most accurately and efficiently under the constraints of the platform, but this can be computationally intractable or even undecidable in digitized settings with finitely many available controls \cite{Bondar_2020}. Optimal control therefore provides the constructive counterpart to reachability: it identifies explicit protocols that maximize the chosen performance criterion and translate abstract control objectives into implementable procedures.

The third major theme of quantum control is \textit{robust control} \cite{Weidner2025RobustQuantumControl,Dahleh1990UncertainQuantumSystems,Goldschmidt2022ModelPredictive,Lawrence2026GeometricQuantumControl}, where we consider controls that remain efficient under noise or model mismatch in order to model real quantum devices more accurately, and the optimization problems take the form 
\[
\sup_u \inf_\delta \Phi(O(T,\delta)),
\]
where $\delta$ characterizes the noise or uncertainty, rendering the problem much more challenging computationally. Real quantum devices are affected by noise, calibration errors, and model uncertainty. The goal of robust control is therefore to design controls whose performance remains high even when the implemented dynamics differ from the idealized model.

In this paper we focus on the first topic, reachability, providing rigorous no-go certificates for $\Phi(T)$ under constrained controls, both on the fidelities reachable at a fixed horizon and on the minimum time needed to reach them. In Figure \ref{fig:graphical_abstract} (a), the reachable operator set $\mathcal{R}_T$ at some time $T$ is depicted. From an initial condition $O_{\rm{in}}$, the admissible control functions characterize the reachable set $\mathcal{R}_T$, where the target observable $O_{\star}$ becomes reachable only at a later time $T^{\star}$. This broad subject can be further specialized depending on the task under consideration, where in this work we focus on three classes of tasks.

First, we consider \textit{state preparation} \cite{SklarzTannor2002,Khaneja2005GRAPE}, a fundamental task of quantum protocols, since quantum protocols typically require the system to be initialized in a specific resource state, such as a computational basis state, an entangled state, or the ground state of a target Hamiltonian. For a target projector $P_{\rm{tar}}\in \rm{Herm}(\mathcal{H})$, we define
\[
\mathcal{F}_{\rm{sp}}(T)=\bra{\psi(T)}P_{\rm{tar}}\ket{\psi(T)}.
\]
%since $O(T)=U^{\dag}(T)O(0)U(T)$, if we let $O(0)=P_{\rm{tar}}$.
If $P_{\rm{tar}}=\ket{\psi_{\rm{tar}}}\bra{\psi_{\rm{tar}}}$, then this reduces to the pure-state fidelity
\begin{equation} 
\mathcal{F}_{\rm{sp}}(T)=|\braket{\psi_{\rm{tar}}|\psi(T)}|^2.
\end{equation}

We also consider \textit{gate synthesis} problems \cite{Palao2002UnitaryTransformations}, a central piece of quantum technologies, as quantum information processing is ultimately built from the accurate implementation of prescribed unitary operations. Here, the target is a unitary operator $U_{\star}\in \mathcal{U}(\mathcal{H})$, and the figure of merit is some desired gate-fidelity $\mathcal{F}_{\rm{gate}}(T)$.

The final task we consider is \textit{state transfer} \cite{Jacobs2016FastCommunication}, important because many quantum technologies rely on moving information, excitations, or population between different parts of a device in a controlled and coherent way, which is particularly relevant in architectures where preparation, processing, and readout occur at different locations and must be connected dynamically. These problems are similar to those of state preparation in terms of the terminal objective
\[
\mathcal{F}_{\rm{st}}(T)=\bra{\psi(T)}P_{\rm{tar}}\ket{\psi(T)},
\]
but now $P_{\rm{tar}}$ represents a mode or target site, or any subspace corresponding to a successful transfer.

These three task classes admit a common formulation. In each case, one fixes a control Hamiltonian, an admissible family of controls satisfying the amplitude and velocity constraints, an initial state, and a terminal objective. The target can then be encoded through a terminal performance function $\Phi(T)$ which depends on the task. This unified viewpoint will be the basis for the optimization and certification framework developed in the rest of the paper.

\subsection{Noncommutative polynomial optimization with differential constraints}
\label{sec:ncpo_dc}

We briefly recall the method of moment relaxations for noncommutative polynomial optimization problems (NCPOP) \cite{npa,navascues_convergent,bounding_quantum}, and its extension to differential constraints \cite{araujo_differential}, which builds the foundation for the method developed in this work.

Let $\mathbb{K}\in\{\mathbb{R},\mathbb{C}\}$ and $\mathbb{K}\langle \underline{X}\rangle$ denote the free $*$-algebra generated by a set of noncommuting variables $\underline{X}=(X_1,\dots,X_n)$, equipped with the involution $*$.

A generic instance of an NCPOP reads
\begin{align}
    p^\star=
    \sup_{\{\mathcal{H},\,\ket{\psi},\,\underline{X}\subset\mathcal{B}(\mathcal{H})\}}&
    \quad \bra{\psi} f(\underline{X}) \ket{\psi}
    \label{eq:ncpo_general_moment}\\
    \text{s.t.}\quad
    &g_j(\underline{X})\succeq 0,\quad j=1,\dots,m_G,\nonumber\\
    &h_i(\underline{X})=0,\quad i=1,\dots,m_H,\nonumber\\
    &\bra{\psi}q_k(\underline{X})\ket{\psi}=0,
    \quad k=1,\dots,m_Q.\nonumber
\end{align}
Here the optimization ranges over all Hilbert spaces $\mathcal{H}$, normalized vectors $\ket{\psi}\in\mathcal{H}$, and feasible tuples of bounded self-adjoint operators. This problem is highly non-convex and, in general, intractable \cite{ji2022mipre}. Following \cite{npa,navascues_convergent,bounding_quantum}, one lifts it to an optimization over moment functionals
\begin{equation}
L_\psi(f)=\bra{\psi}f(\underline X)\ket{\psi},
\end{equation}
or equivalently over moment sequences $\psi_w=L_\psi(w)$ indexed by words $w$ in the noncommuting variables.

Now, to make this problem tractable, one replaces the exact set with a larger, convex set of \emph{pseudo-moment} sequences $(y_w)$. These are collections of numbers 
\begin{equation}
    y_w =L_y(w),\qquad L_y: \mathbb{K}\langle \underline{X}\rangle\to\mathbb{K},
\end{equation}

which satisfy a set of necessary conditions that every genuine quantum moment sequence must obey. Imposing iteratively more necessary conditions on $L_y$ defines a hierarchy of approximations of the set of proper quantum moments, which, under mild conditions, converges to the exact set as $d\rightarrow \infty$ \cite{npa}.

At relaxation order $d$, state positivity is enforced on the truncated algebra $\mathbb K\langle \underline X\rangle_{2d}$ through the moment matrix
\begin{equation}
M_d(y)(a,b)=L_y(a^\ast b),\qquad a,b\in\mathbb K\langle\underline X\rangle_d,
\end{equation}
together with localizing matrices for polynomial positivity constraints
\[
M_{d-d_j}(g_j\,y)(a,b)=L_y(a^*g_j b),
\qquad
d_j=\left\lceil \frac{\deg(g_j)}{2}\right\rceil .
\]
Equality constraints are usually imposed by working in the quotient algebra $\mathbb{K}\langle \underline{X}\rangle_{2d} /\mathcal{I}$ over the ideal $\mathcal{I}$ generated by the constraints $h_i(\underline{X})=0$, e.g., indexing $M_d(y)$ with reduced words $a,b\in \mathbb{K}\langle \underline{X}\rangle_{2d} /\mathcal{I}$. Any additional moment-level constraints are imposed as linear relations on the truncated sequence $(y_w)_{|w|\leq 2d}$. One thus arrives at the order-$d$ moment relaxation of problem \eqref{eq:ncpo_general_moment}:
\begin{align}
    p_d=\sup_y\quad &L_y(f)
    \label{eq:ncpo_moment_general}\\
    \text{s.t.}\quad
    &M_d(y)\succeq 0,\nonumber\\
    &M_{d-d_j}(g_j\,y)\succeq 0,\qquad j=1,\dots,m_G,\nonumber\\
    &L_y(a^*h_ib)=0,\qquad i=1,\dots,m_H,\nonumber\\
    &L_y(q_k)=0,\qquad k=1,\dots,m_Q,\nonumber\\
    &L_y(1)=1.\nonumber
\end{align}
Every feasible quantum realization of \eqref{eq:ncpo_general_moment} yields a feasible pseudo-moment sequence for \eqref{eq:ncpo_moment_general}, so $p_d$ is an upper bound on the true optimum $p^\star$. The advantage is that \eqref{eq:ncpo_moment_general} is a semidefinite program (SDP), making it computationally tractable through a wide range of solvers for such problems \cite{mosek}.

Now, in \cite{araujo_differential} it is shown that one can further generalize the class of problems \eqref{eq:ncpo_general_moment} by allowing \textit{differential constraints} of the form
\begin{equation}
\label{eq:diffeq}
    \frac{dX_i(t)}{dt}=\zeta_i(\underline{X}) \, \text{ for } \, i \in \mathcal{I}_{\mathcal{D}},
\end{equation}
where $\mathcal{I}_{\mathcal{D}}$ denotes the set of variables following some differential equation. This allows one to obtain bounds on system observables following some dynamical evolution, or to model non-polynomial functions emerging as solutions of differential equations. To deal with such constraints, one introduces a formal commuting variable $\tau \in [0,1]$ and, instead of a single static moment functional $L_y$, considers a formally time-dependent linear functional $L_{y(\tau)}$ together with boundary functionals $L_{y(0)}$ and $L_{y(1)}$. The algebra $\mathbb{K}\langle\underline{X},\tau\rangle$ is then generated by the time-dependent variables $X_i(\tau)$ following some differential equation, the time-independent variables $\{X_i: i\notin \mathcal{I}_{\mathcal{D}} \}$, and the commuting variable $\tau$ itself.

% mention that one can model commuting ODE by enforcing commutation with all other variables.

Now, to impose differential identities \eqref{eq:diffeq} on the level of the moment matrices, one first defines a differential map $\mathcal{D}:\mathbb{K}\langle \underline{X},\tau\rangle \rightarrow \mathbb{K}\langle \underline{X},\tau\rangle$ by
\begin{align}
    &\mathcal{D}(X_i) = \zeta_i(\underline{X}),\nonumber \qquad i\in\mathcal{I}_{\mathcal{D}},\\
    &\mathcal{D}(X_i) = 0, \qquad i\notin\mathcal{I}_{\mathcal{D}},\nonumber\\
    &\mathcal{D}(\tau)=1, \nonumber \\
    &\mathcal{D}(hh') = \mathcal{D}(h)h' + h\mathcal{D}(h'),\qquad h,h' \in \mathbb{K}\langle \underline{X},\tau\rangle. \nonumber
\end{align}
By making use of the initial differential equations \eqref{eq:diffeq} and the Leibniz rule, this defines a differential map on the full truncated algebra $\mathbb{K}\langle \underline{X},\tau\rangle_{2d}$. Differential constraints are then incorporated by imposing the moment equivalent of the fundamental theorem of calculus:
\begin{equation}
    L_{y(\tau)}\left(\mathcal{D}(p)\right)-L_{y(1)}(p)+L_{y(0)}(p) = 0.
    \label{eq:general_ibp}
\end{equation}
At relaxation order $d$, a generic dynamical noncommutative polynomial optimization problem (DNPO) is then relaxed to the SDP
\begin{align}
&\sup_{y(\tau),\,y(0),\,y(1)} \quad  L_{y(1)}(f) \label{eq:dnpo_relaxation}\\
\text{s.t.}\quad
 &M_d\!\left(y(\tau)\right) \succeq 0,\quad
  M_d\!\left(y(0)\right) \succeq 0,\quad
  M_d\!\left(y(1)\right) \succeq 0, \nonumber\\
& M_{d-d_j}\!\left(g_j\,y(\tau)\right) \succeq 0,\qquad j=1,\dots,m_G, \nonumber\\
& L_{y(\tau)}(a^* h_i b) = 0,\qquad i=1,\dots,m_H, \nonumber\\
& L_{y(\tau)}(q_k) = 0,\qquad k=1,\dots,m_Q, \nonumber\\
& L_{y(\tau)}\!\left(\mathcal{D} (p)\right) = L_{y(1)}(p) - L_{y(0)}(p),
\ \forall p \in \mathbb{K}\langle \underline{X},\tau\rangle_{2d}, \nonumber\\
& L_{y(0)}(1)=L_{y(1)}(1)=L_{y(\tau)}(1)=1. \nonumber
\end{align}
Here $M_d(y(\tau))$, $M_d(y(0))$, and $M_d(y(1))$ denote the moment matrices associated, respectively, with the interior and boundary pseudo-moment functionals, and $M_{d-d_j}(g_j y(\tau))$ the corresponding localizing matrices. As in the static case, equality constraints may equivalently be imposed by working in the appropriate quotient algebra. The same differential framework also covers the commutative case \cite{LasserreHenrionPrieurTrelat2008Occupation}, obtained by imposing the appropriate commutation relations. 

The framework in this paper takes \cite{araujo_differential} as a starting point but generalizes it to controlled DNPOs: in quantum control, the central object is not a single predefined time evolution, but a whole reachable set generated by all admissible control functions under device constraints. In the next section, we therefore build controlled noncommutative differential moment hierarchies in which the quantum dynamics, control functions, and velocities are treated simultaneously by imposing the equations of motion as differential constraints, and the amplitude and velocity bounds as constraints on the admissible pulses. This turns \eqref{eq:dnpo_relaxation} into a finite-horizon reachability and optimal-time certification framework, whose SDP relaxations bound the best possible fidelity attainable and the lowest time for which a fidelity is reachable by any admissible control, and hence certify finite-time impossibility statements for constrained quantum control.

\section{Method}
We now describe the certification pipeline used throughout the paper to obtain rigorous finite-horizon bounds on reachable fidelities and optimal control times, where we first introduce a hierarchy for reachable fidelities, and then a cumulative version for optimal control times.
\subsection{Reachability certificates}
% should we include plots where we alter U_max or V_max and see if reachability differs?
 We consider reachability problems of the following general form
% do we have to sometimes work in Schrödinger, sometimes in heisenberg picture?
\begin{align}
    &\sup_{\{\ket{\psi},\,u_j,\,v_j,\,O(t)\}}
    \quad
    \Phi(O(T))
    \label{eq:general_reachability_heisenberg}\\
    \text{s.t.}\quad
    &\frac{d}{dt} O(t)
    =
    i\bigl[H(t),\,O(t)\bigr],
    \qquad t\in[0,T],
    \nonumber\\
    &H(t)
    =
    H_0+\sum_{j=1}^m u_j(t)\,H_j,
    \nonumber\\
    &\frac{du_j(t)}{dt} = v_j(t),
    \qquad j=1,\dots,m,
    \nonumber\\
    &\sum u_i(t)^2 \leq U_{\max}^2, \qquad \sum v_i(t)^2 \leq V_{\max}^2 \nonumber\\
    &u(0)=u_0, \qquad O(0)=O_{\mathrm{in}},\nonumber 
\end{align}
where we optimize some terminal figure of merit $\Phi(O(T))$ over possible pure quantum states $\ket{\psi}$, time-dependent control functions $u(t)$, their derivatives $v(t)$, and physical operator trajectories $O(t)$. The way we model $\Phi(O(T))$ as a polynomial in the problem variables depends on the specific problem instance. Likewise, although the dynamics in \eqref{eq:general_reachability_heisenberg} are written in Heisenberg form for generality, the concrete benchmarks below may be expressed in whichever exact representation is most convenient, for instance directly in the Schr\"odinger picture. This general problem class naturally accommodates further constraints  
as in \eqref{eq:ncpo_general_moment}.

The reachability problem \eqref{eq:general_reachability_heisenberg} is close in spirit to the DNPO framework of \cite{araujo_differential}, but differs in an important structural point. In the standard DNPO setting, each time-dependent generator is assigned a prescribed polynomial vector field, and time-independent ones are assigned the trivial vector field. In constrained quantum control, by contrast, the physical variables and the control amplitudes have prescribed dynamics,
\[
\frac{dO(t)}{dt}=i[H(t),O(t)],\qquad \frac{du_j(t)}{dt}=v_j(t),
\]
but the velocity variables $v_j(t)$ are free time-dependent decision functions. They are constrained algebraically, for example by $\sum_j v_j(t)^2\le V_{\max}^2$, but no acceleration model $d v_j/dt=a_j$ is imposed. Thus the relaxation introduced below is a mixed differential/occupation-moment relaxation: differential constraints are imposed only on variables with prescribed evolution, while the velocities enter as free time-dependent generators subject to pointwise polynomial constraints, e.g. indexing only the interior moment matrix corresponding to $L_{y(\tau)}$. We hence restrict the set of test polynomials on which the recursive differential constraints are imposed to the ones composed of variables with prescribed (possibly trivial) dynamics:
\begin{equation}
    \mathcal{T}^d := \{p\in \mathbb{K}\langle O_i(\tau),u_i(\tau),\tau \rangle \, | \, \mathrm{deg}(p)\leq 2d \}.
\end{equation}

We adapt the theory from \cite{araujo_differential} to deal with such  implicitly time-dependent differential constraints. Note that this framework also allows for unconstrained velocities, where the variables $v_j$ and associated constraints can be eliminated and the controls $u_j$ take the role of free time-dependent functions.

The differential framework is naturally expressed in normalized times $\tau\in[0,1]$. To pass to physical times $t\in[0,T]$ it is convenient to introduce the terminal horizon $T$ as an additional problem parameter:
\begin{equation}
    t = T \tau, \qquad \tau \in [0,1],
\end{equation}
which lets us rewrite the differential constraints \eqref{eq:general_reachability_heisenberg},
\begin{equation}\small
    \frac{d}{d\tau} O(\tau)
    =
    iT\bigl[H(\tau),\,O(\tau)\bigr],
    \quad \frac{du_j(\tau)}{d\tau }=Tv_j(\tau),\quad\tau\in[0,1].
\end{equation}
The set of generators $\mathcal{G}_\tau$ of $\mathbb{K}\langle \underline{X},\tau \rangle $ then takes the following form
\begin{equation}
    \mathcal{G}_\tau = \{\tau, u_i(\tau), v_i(\tau), O_i(\tau) \}.
\end{equation}

A generic word $w \in \mathbb{K}\langle \underline{X},\tau \rangle$ generated by $\mathcal{G}_\tau$ can then be written as
\begin{equation}
    w = \tau^a\left(\prod_iu_i(\tau)^{b_i}\right)\left(\prod_j v_j(\tau)^{c_j} \right)w_{O(\tau)},
\end{equation}
where $a,b_i,c_j\in \mathbb{N}$ characterize the commuting monomials, and $w_{O(\tau)}\in \mathbb{K}\langle \underline{X}\rangle $ are noncommutative monomials in the problem variables $O_i(\tau)$. In each concrete benchmark, this generic basis is then specialized further by exploiting exact reductions such as symmetry restriction, or representation changes. In some examples, these reductions eliminate the noncommutative nature of the operator structure entirely, yielding a commutative polynomial optimization problem with differential constraints, which is a special case of the same moment-relaxation framework.

To construct our control relaxations, we roughly follow the construction from \cite{araujo_differential}:
control bounds are encoded as positivity constraints of the form $g_j(\underline{X})\succeq 0$; algebraic identities of the model are imposed as quotient relations $h_i(\underline{X})=0$; and the equations of motion are incorporated through the differential map $\mathcal{D}$ together with the moment form of the fundamental theorem of calculus \eqref{eq:general_ibp} on polynomials $p\in \mathcal{T}^d$. We emphasize again that these differential constraints are imposed only on variables following some evolution, e.g. particularly not on $v_i(t)$.

At relaxation order $d$, this problem is a direct extension of \eqref{eq:dnpo_relaxation} with additional implicitly time-dependent evolution equations for $u_i(t)$, and the SDP relaxation can be written as

\begin{align}\label{eq:SDP_reach}
\Phi^d(T&)=
\sup_{y(\tau),y(0),y(1)}\quad
 L_{y(1)}(f) \\[1mm]
\mathrm{s.t.}\quad
& M_d(y(\tau))\succeq 0,\quad
  M_d(y(0))\succeq 0,\quad
  M_d(y(1))\succeq 0,
\nonumber\\
& M_{d-1}( \alpha(1-\alpha)\,y(\cdot))\succeq 0, \, \,\mathrm{for}\,(\cdot)\in \{0,\tau,1 \}
\nonumber\\
& M_{d- d_j}( g_j\,y(\cdot))\succeq 0, \,\mathrm{for}\,(\cdot)\in \{0,\tau,1 \},
  \quad g_j\in\mathcal{T}^d, 
\nonumber\\
& M_{d- d_j}( g_j\,y(\tau))\succeq 0, \,
  \quad g_j\notin\mathcal{T}^d, 
\nonumber\\
& L_{y(\cdot)}(a^*h_i b)=0,
  \ i=1,\ldots,m_H, \text{ for } (\cdot) \in \{0,\tau,1 \}\nonumber\\
& L_{y(\tau)}(\mathcal{D}(p))=L_{y(1)}(p|_{\tau=1})-L_{y(0)}(p|_{\tau=0}),
  \ p\in\mathcal{T}^d,\nonumber\\
& L_{y(\tau)}(1)=L_{y(0)}(1)=L_{y(1)}(1)=1,\nonumber\\
& L_{y(0)}(w)=w_{\mathrm{in}},\nonumber
\end{align}
where $w_{\mathrm{in}}$ are fixed by the initial data and $f$ is the polynomial expression for $\Phi(O(T))$. The interior moment matrix is indexed by monomials in $\mathbb{K}\langle \underline{X},\tau\rangle$ while the boundary moment matrices are indexed by monomials in $\mathcal{T}^d$. Localizing matrices are built for all three moment functionals $L_{y(\cdot)}$ for inequality constraints with prescribed evolution $g_j\in \mathcal{T}_d$, and only on the interior functional $L_{y(\tau)}$ for the $g_j$ featuring free time-dependent variables such as $v_j$. The following proposition states the soundness and monotonicity of this moment hierarchy.

\begin{proposition}[Finite-horizon reachability certificates from implicitly time-dependent DNPO]
\label{prop:reachability_certificate}
Fix a control horizon $T$ and consider a finite-$T$ reachability problem of the form \eqref{eq:general_reachability_heisenberg} after any \emph{exact} reformulation or reduction of the problem into a polynomial optimization problem with implicitly time-dependent differential constraints. Let $\Phi^{\star}(T)$ denote the corresponding optimal reachable value, and let $\Phi^d(T)$ denote the optimal value of the corresponding order-$d$ relaxation.

Then every physically feasible controlled trajectory induces a feasible moment tuple for \eqref{eq:SDP_reach}. Consequently,
\begin{equation}
    \Phi^{\star}(T)\leq \Phi^d(T)
\end{equation}
for every relaxation order $d$. Furthermore, the upper bounds are monotonically improving
\begin{equation}
    \cdots \leq \Phi^{d+1}(T)\leq\Phi^d(T) \leq \cdots 
\end{equation}

\end{proposition}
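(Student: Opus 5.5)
The plan is the standard soundness argument for moment relaxations. Take a physically feasible controlled trajectory, build three moment functionals from it, check every constraint of \eqref{eq:SDP_reach}, and then get monotonicity by restricting feasible points from order $d+1$ to order $d$.

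\textbf{Construction of the moment functionals.} Fix an admissible pair $(u,v)$ on $[0,T]$, with $u\in W^{1,\infty}$ and $v=\dot u$ measurable and bounded. Fix the initial state $\ket{\psi}$ and the resulting Heisenberg trajectory $O(t)$, all expressed in the exact reformulated variables, and rescale to $\tau=t/T$. For a word $w$ with $\tau$-degree $a$, define
\[
L_{y(\tau)}(w)=\int_0^1 \bra{\psi}w(s)\ket{\psi}\,ds,
\]
where $w(s)$ is obtained by substituting $\tau\mapsto s$, $u_i(\tau)\mapsto u_i(sT)$, $v_j(\tau)\mapsto v_j(sT)$ and $O_i(\tau)\mapsto O_i(sT)$. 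The boundary functionals are
\[
L_{y(0)}(p)=\bra{\psi}p|_{\tau=0}\ket{\psi},\qquad L_{y(1)}(p)=\bra{\psi}p|_{\tau=1}\ket{\psi},
\]
for $p\in\mathcal{T}^d$. This is well defined because the boundary functionals never involve $v$, and $v$ need not have pointwise values.

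\textbf{Checking the constraints.} Positivity of the moment matrices follows because each matrix is an average, or a point evaluation, of Gram matrices
\[
\bra{\psi}a(s)^* b(s)\ket{\psi}=\langle a(s)\psi, b(s)\psi\rangle .
\]
A convex combination of PSD matrices is PSD, so the integral is PSD. The localizing constraints follow the same way from $g_j(s)\succeq 0$. For $g_j\in\mathcal{T}^d$ this holds for every $s$, including the endpoints. For $g_j$ involving $v$ it holds only for almost every $s$, which suffices for the integral but not for point evaluation. This is exactly why those localizing matrices are imposed only on $y(\tau)$. The constraint $\alpha(1-\alpha)\ge0$, with $\alpha$ the time variable, holds on $[0,1]$. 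Quotient relations $h_i=0$ hold identically along the exact reformulation. Normalization and $L_{y(0)}(w)=w_{\mathrm{in}}$ come from $u(0)=u_0$ and $O(0)=O_{\mathrm{in}}$.

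\textbf{The main step: the differential constraint.} For $p\in\mathcal{T}^d$, the scalar map $s\mapsto \bra{\psi}p(s)\ket{\psi}$ is absolutely continuous, because $O$ is $C^1$ and $u$ is Lipschitz. Its a.e.\ derivative is $\bra{\psi}\mathcal{D}(p)(s)\ket{\psi}$. To see this, apply the Leibniz rule in the noncommutative product, preserving operator order, using
\[
\frac{dO}{ds}=iT[H,O],\qquad \frac{du}{ds}=Tv,\qquad \frac{d\tau}{ds}=1 .
\]
Here $\mathcal{D}$ introduces $v$, which is exactly where the interior functional must be allowed to contain $v$-monomials. The fundamental theorem of calculus for absolutely continuous functions then gives the identity in \eqref{eq:SDP_reach}. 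I expect this to be the delicate point: one needs that $W^{1,\infty}$ regularity gives absolute continuity of products, and that $\mathcal{D}$ acting on words reproduces the chain rule even though $v$ is merely measurable. Both hold because products of bounded absolutely continuous functions are absolutely continuous.

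\textbf{Conclusion and monotonicity.} By construction $L_{y(1)}(f)=\Phi(O(T))$. Taking the supremum over admissible controls and initial states then gives $\Phi^\star(T)\le\Phi^d(T)$. For monotonicity, take any feasible tuple at order $d+1$ and restrict its functionals to degree $\le 2d$. The order-$d$ moment and localizing matrices are principal submatrices of the order-$(d+1)$ ones, so they stay PSD. The equality constraints and the differential identities for $p\in\mathcal{T}^d\subset\mathcal{T}^{d+1}$ are a subset of those already imposed. The objective is unchanged. Hence the order-$(d+1)$ feasible set maps into the order-$d$ feasible set, which yields $\Phi^{d+1}(T)\le\Phi^d(T)$.
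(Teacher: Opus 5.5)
Your proposal is correct and follows essentially the same route as the paper's proof. Like the paper, you define the interior functional by integrating (expectation values of) polynomials along the rescaled trajectory and the boundary functionals as Dirac evaluations on $\mathcal{T}^d$, check the positivity, localizing, equality and initial-data constraints, derive the differential identity from the chain rule and the fundamental theorem of calculus, and get monotonicity by restricting an order-$(d+1)$ feasible tuple to degree $2d$; your Gram-matrix argument and your $W^{1,\infty}$ regularity remarks (absolute continuity, and $v$ defined only a.e., which is why $v$-dependent localizing constraints sit only on $y(\tau)$) spell out what the paper states more briefly.
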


\begin{proof}
    Fix a time horizon $T$, and let $u(t)$ and the corresponding reduced variables $O(t)$ define a physically feasible controlled trajectory, $\gamma(\tau)$, of the polynomial control problem obtained from \eqref{eq:general_reachability_heisenberg}. Let $f$ be the polynomial expression in reduced variables of the objective function, such that evaluating $f$ at the terminal point of the trajectory reproduces the expected physical objective value.

    We can define the associated truncated moment functionals $L_{y(\tau)},L_{y(0)},L_{y(1)}$ by evaluating polynomials on the given trajectory and, in the noncommutative case, taking the expectation value. We define polynomials on the whole trajectory to be $p:=p(\tau,u,v,O)$, and polynomials on the boundaries as $p_{\mathrm{b}}:=p(u,O)\in \mathcal{T}^d$, acting only on variables with well-defined endpoint values. To be precise
    \begin{align*}
        L_{y(\tau)}(p)&=\int_0^1 p(\gamma(\tau))d\tau, \\ L_{y(0)}(p_{\mathrm{b}}) = p_{\mathrm{b}}|_{\tau=0},&\quad L_{y(1)}(p_{\mathrm{b}}) = p_{\mathrm{b}}|_{\tau=1}.
    \end{align*}

    We now show that these moments are feasible for \eqref{eq:SDP_reach}. First, note that normalization holds by construction, as $L_{y(\tau)}(1)=L_{y(0)}(1)=L_{y(1)}(1)=1$, and boundary constraints hold because $L_{y(0)}$, $L_{y(1)}$ are the moment functionals of Dirac masses at the initial and terminal points of the admissible trajectory.

    Every bulk positivity constraint is satisfied almost everywhere along the trajectory. Therefore, for every admissible test polynomial $r$ we have $L_{y(\tau)}(r^*g_j r)\geq 0$. Boundary positivity constraints are imposed only for constraints involving variables with well-defined endpoint values, which hold because the boundary functionals are Dirac evaluations at the endpoints. Therefore, the corresponding moment and localizing matrices are positive semidefinite.

    The equality constraints $h_i(\underline{X})=0$ also hold pointwise along the whole trajectory, so it follows that $L_{y(\tau)}(a^*h_i b)=0$ for all feasible words $a,b$, and similarly for the boundary points. Therefore, the equality constraints of \eqref{eq:SDP_reach} are satisfied.

    Finally, the differential constraints follow from the chain rule and the definition of the map $\mathcal{D}$, where differential identities are only imposed on admissible test polynomials \(p=p(\tau,u,O)\) that do not contain the free velocity variables \(v_j\), although \(\mathcal D(p)\) may contain them. Along the feasible trajectory, and for any admissible polynomial $p$ on variables with prescribed differential evolution,
    \begin{equation*}
        \frac{d}{d\tau}p=\mathcal{D}(p).
    \end{equation*}
    Integrating and applying the fundamental theorem of calculus, we get 
    \begin{equation*}
        L_{y(\tau)}(\mathcal{D}(p))=L_{y(1)}(p_{\mathrm{b}})-L_{y(0)}(p_{\mathrm{b}}),
    \end{equation*}
    which is exactly the differential constraint imposed in \eqref{eq:SDP_reach}.

    Thus every physically feasible controlled trajectory induces a feasible moment tuple for the order-$d$ DNPO relaxation. Moreover, by construction, the terminal objective is preserved $L_{y(1)}(f)=\Phi(O(T))$.

    Therefore every admissible physical trajectory gives a feasible point of the relaxation with the same objective value. Since the relaxation optimizes over all such feasible moment tuples, its optimum is an upper bound on the true reachable value. Hence,
    \begin{equation*}
        \Phi^{\star}(T)\leq \Phi^d(T).
    \end{equation*}

    To prove monotonicity, note that the order-$(d+1)$ relaxation contains all moments up to degree $2(d+1)$ and imposes all constraints of the order-$d$ relaxation, together with other moment, localizing and differential constraints from other higher-degree test polynomials. Therefore, if a moment tuple is feasible at order $d+1$, its restriction to moments up to degree $2d$ is feasible for order-$d$. Therefore, the feasible set of the order-$(d+1)$ projects onto the feasible set of order-$d$ relaxation, and since both problems are maximization problems of the same objective, it follows that
    \[
    \dots\leq \Phi^{d+1}(T)\leq \Phi^d(T)\leq \dots.
    \]
    This proves that the upper bounds are monotonically improving as the relaxation order increases.
    
\end{proof}

Geometrically, the hierarchy yields a nested family of convex outer approximations to the true fixed-horizon reachable set, as illustrated schematically in Fig.~\ref{fig:graphical_abstract}(b).

\subsection{Optimal time lower bounds}
One can slightly modify the hierarchy presented in Proposition \ref{prop:reachability_certificate}, to obtain lower bounds on optimal control times required to reach a certain target. The main technical adjustment lies in changing from certifying reachability at a \emph{fixed} horizon $T$ to a cumulative problem, certifying all physical times $0\leq s \leq T$.

Starting with the problem formulation, one asks for the optimal control time to reach a terminal fidelity larger than some target threshold $\eta$,
\begin{align}
    &\inf_{\{\ket{\psi},\,u_j,\,v_j,\,O(t)\}}
    \quad
    T
    \label{eq:general_optimal_time}\\
    \text{s.t.}\quad
    &  \Phi(O(T))\geq \eta, \nonumber\\
    &\frac{d}{dt} O(t)
    =
    i\bigl[H(t),\,O(t)\bigr],
    \qquad t\in[0,T],
    \nonumber\\
    &H(t)
    =
    H_0+\sum_{j=1}^m u_j(t)\,H_j,
    \nonumber\\
    &\frac{du_j(t)}{dt} = v_j(t),
    \qquad j=1,\dots,m,
    \nonumber\\
    &\sum u_i(t)^2 \leq U_{\max}^2, \qquad \sum v_i(t)^2 \leq V_{\max}^2 \nonumber\\
    &u(0)=u_0, \qquad O(0)=O_{\mathrm{in}}.\nonumber 
\end{align}

To relax the optimal-time problem directly, we introduce a parameter $\overline{T}$ as the maximal search horizon for optimal stopping times,  $s\in[0,\overline{T}]$, and introduce a rescaled stopping time variable $\alpha\in[0,1]$, so that the physical terminal time is $s=\alpha \overline{T} $. The optimal-time problem is then relaxed over trajectories in normalized
time $\tau\in[0,1]$, with dynamics scaled by the unknown stopping time
$\alpha \overline{T}$:
\begin{align}
    \frac{dO(\tau)}{d\tau}& = i\alpha \overline{T}[H(\tau),O(\tau)] \label{eq:diffeqalpha}\\
    \frac{du_j(\tau)}{d\tau}& = \alpha \overline{T} v_j(
\tau) \nonumber
\end{align}

We set $\mathcal{D}(\alpha) = 0$, fixing its explicit time-independence. As before, the differential constraints are imposed only on admissible test polynomials involving variables with prescribed evolution, now of the form $p=p(\tau,\alpha,u,O)$. 

At relaxation order $d$, since the physical stopping time is $s=\alpha\overline{T}$, the relaxed objective is $\overline T\,L_{y(1)}(\alpha)$, and the optimal-time relaxation is

\begin{align}\label{eq:sdp_time}
T^d_\eta(\overline{T})&
=
\inf_{y(\tau),y(0),y(1)}\quad
 \overline T\,L_{y(1)}(\alpha)
\\[1mm]
\mathrm{s.t.}\quad
& L_{y(1)}(f)\ge \eta,
\nonumber\\
& M_d(y(\tau))\succeq 0,\quad
  M_d(y(0))\succeq 0,\quad
  M_d(y(1))\succeq 0,
\nonumber\\
& M_{d-1}( \alpha(1-\alpha)\,y(\cdot))\succeq 0, \, \,\mathrm{for}\,(\cdot)\in \{0,\tau,1 \}
\nonumber\\
& M_{d- d_j}( g_j\,y(\cdot))\succeq 0, \,\mathrm{for}\,(\cdot)\in \{0,\tau,1 \},
  \quad g_j\in\mathcal{T}^d, 
\nonumber\\
& M_{d- d_j}( g_j\,y(\tau))\succeq 0, \,
  \quad g_j\notin\mathcal{T}^d, 
\nonumber\\
& L_{y(\cdot)}(a^*h_i b)=0,
  \quad i=1,..,m_H,\text{ for } (\cdot) \in \{0,\tau,1 \}
\nonumber\\
& L_{y(\tau)}\left(\mathcal{D}\left(p\right)\right)
  =
  L_{y(1)}(p|_{\tau=1})
  -
  L_{y(0)}(p|_{\tau=0}),
  \ p\in\mathcal T^d,
\nonumber\\
& L_{y(\tau)}(1)=L_{y(0)}(1)=L_{y(1)}(1)=1,
\nonumber\\
& L_{y(0)}(w)=w_{\mathrm{in}}.\nonumber
\end{align}

Here $f$ is the polynomial representative of the terminal objective $\Phi$, so the constraint $L_{y(1)}(f)\ge\eta$ is the moment form of the target condition $\Phi(O(s))\ge\eta$. Note that $\overline{T}$ should be chosen larger than the expected optimal control time, since if $T^\star_\eta > \overline{T}$, the corresponding problem is infeasible and will certify only $ \overline{T}<T^\star_\eta$. A convenient choice of $\overline{T}>T^\star_\eta $ is any available feasible control temporal data obtained through explicit control pulses reaching threshold $\eta$. Empirically, we observe that $\overline{T}$ should not be chosen too large, as the quality of the bounds depends on the \textit{tightness} of the chosen search window, see Section \ref{sec:statetransfer} for details. 

The constraint $\alpha\in [0,1]$ is implemented via the corresponding localizing matrices. The set $\mathcal T^d$ again denotes the admissible differential test polynomials in the variables $(\tau,\alpha,u,O)$, chosen so that both $p$ and $\mathcal{D}(p)$ are represented at relaxation order $d$. The last constraints impose the initial data $u(0)=u_0$ and $O(0)=O_{\mathrm{in}}$ in moment form, now in the
enlarged algebra including the static variable $\alpha$.

\begin{proposition}[Certified lower bounds on optimal control times]
\label{prop:threshold_time}
Fix a target objective threshold \(\eta\in\mathbb{R}\) and define the
true optimal control time
\[
T^\star_\eta
:=
\inf\left\{
T\ge 0:\  \Phi(O(T))\ge \eta
\right\}.
\]
Fix a maximal search horizon $\overline T>0$, and let
$T^d_\eta(\overline T)$ denote the value of the order-$d$
optimal-time relaxation if feasible. Then every admissible controlled trajectory reaching the threshold
$\eta$ up to some time $\overline T$ induces a feasible moment tuple
for the order-$d$ optimal-time relaxation.
Consequently,
\[
T^d_\eta(\overline T)\le T^\star_\eta,
\]
whenever $T^\star_\eta\le \overline T$. Hence
$T^d_\eta(\overline T)$ is a certified lower bound on the true optimal
control time.

If the relaxation is infeasible, then no admissible physical trajectory
can reach the threshold $\eta$ by time $ \overline T$, and
therefore
\[
\overline T\le T^\star_\eta.
\]
\end{proposition}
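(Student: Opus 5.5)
The argument mirrors the soundness part of Proposition~\ref{prop:reachability_certificate}: build a feasible moment tuple from any admissible trajectory, now including the static variable $\alpha$.

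Let $(u,v,O)$ be an admissible controlled trajectory on $[0,s]$ with $s\le\overline T$ and $\Phi(O(s))\ge\eta$. Set $\alpha_0:=s/\overline T\in[0,1]$ and reparametrize to normalized time, $\gamma(\tau):=(u(\alpha_0\overline T\tau),\,v(\alpha_0\overline T\tau),\,O(\alpha_0\overline T\tau))$. By the chain rule, $\gamma$ satisfies \eqref{eq:diffeqalpha} with $\alpha\equiv\alpha_0$. If $s=0$ the trajectory is constant, which is consistent with this scaling.

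Define the three functionals as in the proof of Proposition~\ref{prop:reachability_certificate}, with $\alpha$ replaced by the constant $\alpha_0$ in every polynomial:
\begin{align*}
L_{y(\tau)}(p)&=\int_0^1 p(\tau,\alpha_0,\gamma(\tau))\,d\tau,\\
L_{y(0)}(p_{\rm b})&=p_{\rm b}(\alpha_0,\gamma(0)),\\
L_{y(1)}(p_{\rm b})&=p_{\rm b}(\alpha_0,\gamma(1)).
\end{align*}
In the noncommutative case, take expectation values in the physical state. Feasibility is then checked constraint by constraint.
\begin{itemize}
\item Normalization holds by construction.
\item The initial data $L_{y(0)}(w)=w_{\rm in}$ hold because $\gamma(0)=(u_0,\cdot,O_{\rm in})$.
\item The moment and localizing matrices are positive semidefinite, because they are integrals or Dirac evaluations of pointwise positive semidefinite quantities. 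This includes the new localizing matrices for $\alpha(1-\alpha)\ge0$, since $\alpha_0\in[0,1]$.
\item The quotient relations hold pointwise along the trajectory.
\item The target constraint holds: $L_{y(1)}(f)=\Phi(O(s))\ge\eta$.
\end{itemize}

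For the differential constraints, take $p=p(\tau,\alpha,u,O)\in\mathcal T^d$. Because $\mathcal D(\alpha)=0$ and $\alpha$ is evaluated at the constant $\alpha_0$, the Leibniz rule together with \eqref{eq:diffeqalpha} gives $\frac{d}{d\tau}p(\tau,\alpha_0,\gamma(\tau))=\mathcal D(p)$ evaluated along $\gamma$, for almost every $\tau$.

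The one point needing care is regularity. Since $u\in W^{1,\infty}$ and $O$ is absolutely continuous, the composite $\tau\mapsto p(\tau,\alpha_0,\gamma(\tau))$ is absolutely continuous. The fundamental theorem of calculus therefore applies and yields
\[
L_{y(\tau)}(\mathcal D(p))=L_{y(1)}(p|_{\tau=1})-L_{y(0)}(p|_{\tau=0}).
\]
Test polynomials are restricted to $\mathcal T^d$, which excludes $v$, so no derivative of the merely $L^\infty$ velocity is ever needed. The objective value of this tuple is $\overline T\,L_{y(1)}(\alpha)=\overline T\alpha_0=s$.

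To conclude, suppose $T^\star_\eta\le\overline T$. Take admissible trajectories reaching $\eta$ at times $s_k\downarrow T^\star_\eta$, with $s_k\le\overline T$; if the infimum is attained, take $s=T^\star_\eta$ directly. Each such trajectory yields a feasible tuple with value $s_k$, so $T^d_\eta(\overline T)\le s_k$ for every $k$. Taking the infimum gives $T^d_\eta(\overline T)\le T^\star_\eta$.

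For the infeasibility statement, argue by contraposition. If some admissible trajectory reached the threshold at a time $s\le\overline T$, the construction above would give a feasible point of the relaxation. Hence infeasibility means no time in $[0,\overline T]$ works, so $T^\star_\eta\ge\overline T$.

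The main subtle step is the one flagged above: justifying that $\alpha$ may be treated as a static constant, so that $\mathcal D(\alpha)=0$, and that the time rescaling preserves the differential identities. Both follow from the chain rule once the absolute continuity of the composite polynomial has been checked.
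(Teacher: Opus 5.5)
Your proposal is correct and follows essentially the same route as the paper: rescale by $\alpha=s/\overline T$, evaluate the three functionals along the reparametrized trajectory to get a feasible tuple with objective $\overline T\,L_{y(1)}(\alpha)=s$, then obtain the lower bound directly and the infeasibility statement by contraposition. Your sequence $s_k\downarrow T^\star_\eta$ and your absolute-continuity justification of the fundamental theorem of calculus are slightly more careful than the paper, which tacitly assumes the infimum defining $T^\star_\eta$ is attained (the one case your sequence misses, $T^\star_\eta=\overline T$ unattained, follows from $0\le L_{y(1)}(\alpha)\le 1$, which the localizing and moment constraints enforce whenever the relaxation is feasible).
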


\begin{proof}
Fix $\overline{T}> 0$ and let $s\in[0,\overline{T}]$. Consider any admissible physical trajectory on $[0,s]$, with controls $u(t)$, velocities $v(t)$, and variables $O(t)$. Set $\alpha=s/\overline{T}\in[0,1]$ and reparametrize by $t=s\tau=\alpha \overline{T}\tau$, $\tau\in[0,1]$.

Under this reparametrization, one obtains the rescaled equations of motion \eqref{eq:diffeqalpha}. The amplitude and velocity constraints are unchanged, since they are imposed in physical time and hold pointwise along the original trajectory, and $\alpha\in[0,1]$ satisfies the localizing constraint $\alpha(1-\alpha)\ge0$. Defining the moment functionals by evaluation along this trajectory, exactly as in the proof of Proposition~\ref{prop:reachability_certificate}, all moment, localizing, equality and differential constraints of the order-$d$ relaxation are satisfied. The terminal objective is preserved, $L_{y(1)}(f)=\Phi(O(s))$, and the relaxed time objective evaluates to
\[
\overline{T}\,L_{y(1)}(\alpha)=\overline{T}\,\alpha=s,
\]
the physical stopping time of the trajectory.

\emph{Lower bound.} Suppose $T^\star_\eta\le\overline{T}$. By definition of $T^\star_\eta$ there exists an admissible trajectory reaching $\Phi(O(s))\ge\eta$ at $s=T^\star_\eta$. By the construction above, this trajectory induces a feasible moment tuple for the order-$d$ relaxation, since it satisfies the target constraint $L_{y(1)}(f)=\Phi(O(s))\ge\eta$, with objective value $\overline{T}\,L_{y(1)}(\alpha)=T^\star_\eta$. As $T^d_\eta(\overline{T})$ minimizes the same objective over the larger feasible set of all moment tuples,
\[
T^d_\eta(\overline{T})\le T^\star_\eta.
\]

\emph{Infeasibility.} Conversely, suppose the order-$d$ relaxation is infeasible. If some admissible trajectory reached $\Phi(O(s))\ge\eta$ at a stopping time $s\le\overline{T}$, the construction above would yield a feasible moment tuple, contradicting infeasibility. Hence no admissible control reaches the threshold $\eta$ at any time $s\le\overline{T}$, and therefore, by the definition of $T^\star_\eta$,
\[
\overline{T}\le T^\star_\eta. \qedhere
\]

\end{proof}

Note that if we let $\overline{T}\rightarrow \infty$, the infeasibility of the optimization problem \eqref{eq:general_optimal_time} for all  $s\in[0,\overline{T}]$ also certifies   uncontrollability of the system of interest, which constitutes a result of independent interest.  However, we leave the formal details of this implication for future studies.

To conclude, the main conceptual advance of this method is that, once combined with exact structure-preserving reductions, differential moment hierarchies with implicit time-dependence become a practical certification framework for constrained finite-horizon quantum control, yielding computable upper bounds on maximal reachable fidelities and lower bounds on the minimum control times needed to achieve them. 

\section{Results}
\label{sec:results}

We now apply the certification framework to three representative quantum control tasks: entangled-state preparation, one-qubit gate synthesis under restricted control geometries, and end-to-end excitation transfer in a boundary-controlled $XX$ chain. 

For all benchmarks considered, the central technical step is an \emph{exact} reduction of the original dynamics before relaxation. The SDP values obtained from \eqref{eq:SDP_reach} and \eqref{eq:sdp_time} are therefore rigorous bounds for the original control problems after rewriting them in the corresponding exact reduced variables, where any loss of tightness comes from the SDP relaxation itself, and not from the problem reformulation.
By contrast, the lower bounds on reachable fidelities are obtained from explicit feasible pulses found numerically by a GRAPE-inspired method~\cite{Khaneja2005GRAPE}. Concretely, for each fixed horizon we parametrize the controls as piecewise-linear functions with values on a uniform time grid, and optimize the terminal fidelity over these values. The gradient is approximated by differentiating the discretized dynamics, where implementation details vary with each example. We normalize the obtained gradient and project back onto the admissible pulse set. Since this is a local nonconvex search, we use several random initial pulses and keep the best feasible trajectory found. These pulses provide lower bounds on reachable fidelities and feasible upper bounds on threshold times, while all impossibility statements come from the SDP relaxations. We employ the commercial semidefinite solver MOSEK \cite{mosek} to solve the resulting SDPs for each benchmark. The code to reproduce our results is available on GitHub \footnote{\url{https://github.com/nininaceur/ReachabilityQuantumControl}}.

\subsection{State preparation}
We first consider finite-time preparation of the Bell state $\ket{\Phi^+}=\frac{1}{\sqrt{2}}(\ket{00}+\ket{11})$ from the two-qubit product state $\ket{\psi(0)}=\ket{00}$. We use two continuous controls $u_x(t),u_y(t)\in W^{1,\infty}([0,T],\mathbb{R})$ subject to amplitude and velocity constraints. The fidelity $\mathcal{F}_{\Phi^+}(T)$ to the target state at time $T$ can be expressed as a polynomial in Pauli operators $O=\{X_i,Y_i,Z_i\}_{i=1,2}$, which we use interchangeably with $\{\sigma^x_i,\sigma^y_i,\sigma^z_i\}_{i=1,2}$,
\begin{align*}
\mathcal{F}_{\Phi^+}(T) &:= \braket{\psi(T)|\Phi^{+}}\braket{\Phi^{+}|\psi(T)}\\
    &= \frac{1+\bra{\psi(T)}X_1X_2-Y_1Y_2 + Z_1Z_2\ket{\psi(T)}}{4}.
\end{align*}
More generally, arbitrary $N$-qubit target-state fidelities can be written as polynomial observables because Pauli strings span $\mathrm{Herm}(\mathbb{C}^{2^N})$. Before exploiting structure, and after switching to the Heisenberg picture, the fixed-$T$ reachability problem reads

\begin{align}
\small
    &\sup_{\{u_x,\,u_y,\,v_x,\,v_y,\,O\in \{X_i,Y_i,Z_i\}\}}\mathcal{F}_{\Phi^+}(T)
    \label{eq:2q_bell_reachability}\\
    \text{s.t.}\quad
    &\frac{d}{dt}O(t)
    = 
    i\,[H(t),O(t)],
    \qquad t\in[0,T],
    \nonumber\\
    &H(t)
    =
    H_0+u_x(t)H_x+u_y(t)H_y,
    \nonumber\\
    &H_0=\frac{J}{2}Z_1Z_2,\quad
    H_x=\frac{1}{2}(X_1+X_2),\nonumber \\
    &H_y=\frac{1}{2}(Y_1+Y_2),
    \nonumber\\
    &\frac{du_x(t)}{dt}=v_x(t),\qquad
    \frac{du_y(t)}{dt}=v_y(t),
    \nonumber\\
    &u_x(t)^2+u_y(t)^2\le U_{\max}^2,
    \quad
    v_x(t)^2+v_y(t)^2\le V_{\max}^2,
    \nonumber\\
    &
    u_x(0)=u_y(0) = 0,
    \nonumber \\
    &\ket{\psi} = \ket{00}\nonumber.
\end{align}
Throughout this section, we set $J=1,U_{\max}=1,$ and $V_{\max}=2$.

Now, to phrase this problem as a noncommutative polynomial optimization problem as in \eqref{eq:SDP_reach}, we consider the Pauli operators as formal variables subject to algebraic $SU(2)$ constraints
\begin{equation}
    [\sigma_i^a,\sigma_j^b] = 2i\epsilon_{abc}\delta_{ij}\sigma_i^c, \qquad (\sigma_i^a)^2 = 1,
\end{equation}
and work in the quotient algebra $\mathbb{K}\langle\underline{X},\tau \rangle / \mathcal{I}$ over the ideal $\mathcal{I}$ generated by the constraints. We restrict the Pauli sector to monomials up to degree $2$, as higher-degree monomials collapse under the quotient relations. The relaxation order $d$ hence only accounts for the maximal degree of the commuting moments, and the indexing basis for the Pauli sector remains fixed for all $d$. To impose the initial condition $\ket{\psi} = \ket{00}$ we impose that the $\tau=0$ pseudo-moments  coincide with the true moments of $\ket{00}$ for all monomials $m \in \mathbb{K}\langle O,u_i,\tau \rangle_{2d}/\mathcal{I}$:
\begin{equation}
    L_{y(0)}(m) = \bra{00} m\ket{00}
\end{equation}
Now, the size of the SDP can be further reduced by exploiting additional exact symmetries of the benchmark. First, both the drift $H_0$, the controls $H_X,H_Y$, the initial state $\ket{00}$, and the target projector onto $\ket{\Phi^+}$ are invariant under the qubit-swap operator $\mathrm{SWAP}$
\begin{equation}
\label{eq:swap}
\mathrm{SWAP}=\frac{1}{2}\bigl(I+X_1X_2+Y_1Y_2+Z_1Z_2\bigr).
\end{equation}
 We may therefore restrict the operator variables to the swap-invariant algebra, i.e.\ to operators $O$ satisfying $[\mathrm{SWAP},O]=0$. 

Secondly, since $\ket{00}$ lies in the triplet sector and the dynamics preserve that sector, the following relation holds
\begin{equation}
\mathrm{SWAP}\ket{\psi}=\ket{\psi},
\end{equation}
which, using \eqref{eq:swap} yields another constraint that can be exploited to reduce the SDP size:
\begin{equation}
\bigl(X_1X_2+Y_1Y_2+Z_1Z_2-I\bigr)\ket{\psi}=0.
\end{equation}
Finally, within the triplet-reduced algebra, we exploit an additional exact $\mathbb{Z}_2$ symmetry of the full DNPO problem. Its action on the variables is given by

\begin{equation}
\begin{aligned}
(\tau,u_x,u_y,v_x,v_y,m_P)\mapsto {}&
(\tau,-u_x,-u_y,-v_x,-v_y, \\
&\qquad Z_1Z_2\,m_P\,Z_1Z_2),
\end{aligned}
\end{equation}
where $m_P$ denotes an arbitrary Pauli monomial. Because the problem is invariant under this $\mathbb{Z}_2$ action, every feasible moment functional may be chosen $\mathbb{Z}_2$-invariant. It follows that all odd moments vanish. Now, moment matrix entries are indexed by products $u^*v$, whose parity is the product of the parities of words $u$ and $v$. Hence, entries mixing even and odd basis monomials do not contribute, and the PSD matrix decomposes into two independent blocks corresponding to the even and odd $\mathbb{Z}_2$ sectors. These reductions are exact and are essential for tractability; see Table~\ref{tab:structure_exploitation} for the resulting block-size reductions. 

The same strategy extends naturally to larger multipartite state-preparation tasks. As a representative three-qubit example, we also consider the generation of the GHZ state \cite{GreenbergerHorneZeilinger1989GoingBeyondBell}
\begin{equation}
\ket{\mathrm{GHZ}}=\frac{\ket{000}+\ket{111}}{\sqrt{2}},
\end{equation}
under the globally driven Ising Hamiltonian
\begin{align}
H(t)&=\frac{J}{2}\left(Z_1Z_2+Z_2Z_3 + Z_1Z_3\right)\\&+\frac{u_x(t)}{2}\left(X_1+X_2+X_3\right)\nonumber +\frac{u_y(t)}{2}\left(Y_1+Y_2+Y_3\right),
\end{align}
with the same amplitude and velocity constraints on $u_x,u_y$. The corresponding fidelity objective is again a polynomial in Pauli moments
\begin{align}\small
&\mathcal{F}_{\mathrm{GHZ}}(T)
=
\frac{1}{8}(
1
+\langle Z_1Z_2\rangle_T
+\langle Z_1Z_3\rangle_T 
+\langle Z_2Z_3\rangle_T \\ \small
&+\langle X_1X_2X_3\rangle_T
-\langle X_1Y_2Y_3\rangle_T
-\langle Y_1X_2Y_3\rangle_T
-\langle Y_1Y_2X_3\rangle_T
), \nonumber
\end{align}
so the problem fits the same structure as the two-qubit example. The initial state is again fixed as the product state $\ket{000}$ by imposing its operator moments on the boundary functional $L_{y(0)}$.  To keep the GHZ relaxation tractable, we exploit the full permutation symmetry of the problem under the action of the symmetry group $S_3$, which allows us to restrict to the permutation-invariant operator algebra. In addition, because the dynamics remain in the fully symmetric spin-$3/2$ sector, the operator variables can be further compressed to this irreducible subspace, similarly to the Bell pair benchmark.

Figures 
\ref{fig:bounds_2q_symmetry_d2_d3_extended} and \ref{fig:bounds_3q_ghz_symmetry_d2_d3} show the computed finite-time SDP upper bounds on the terminal fidelities for constrained Bell state preparation and GHZ preparation respectively. The insets show representative feasible control pulses at illustrative horizons $T$ giving rise to the displayed lower bounds.

\begin{figure}[H]
    \centering
    \includegraphics[width=\linewidth]{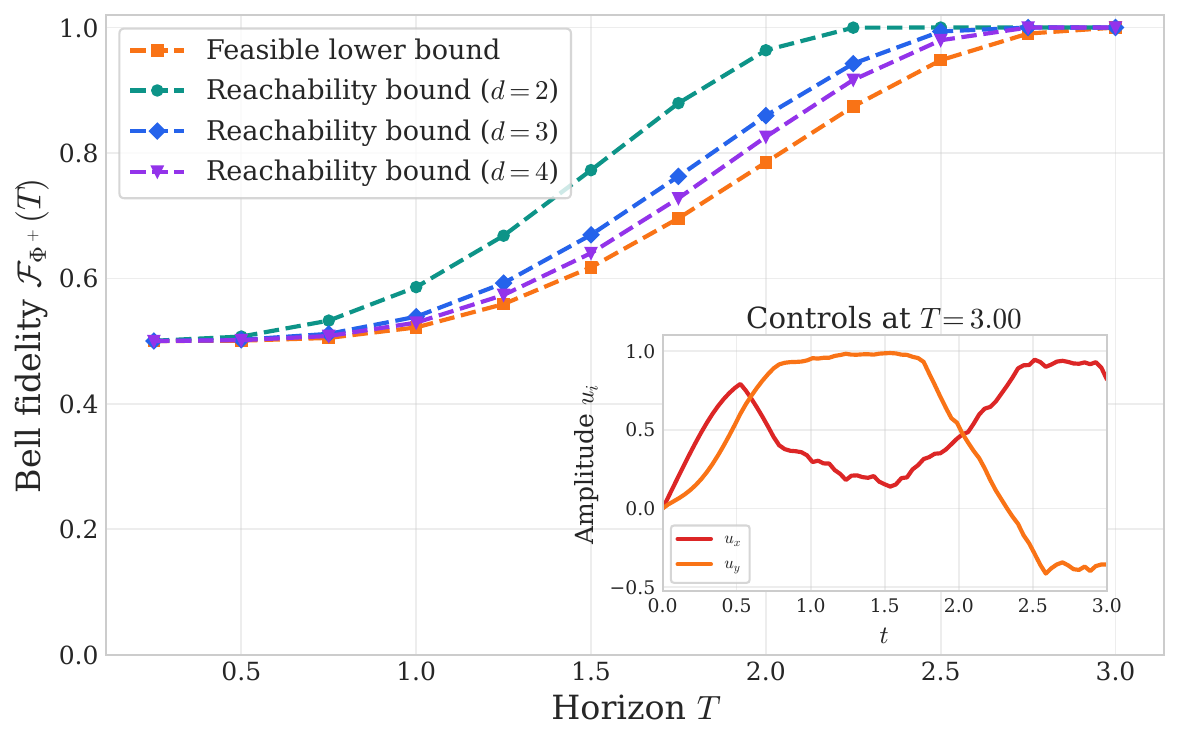}
    \caption{Comparison of the feasible lower bound and the  reachability upper bounds on the target fidelity $\mathcal{F}_{\Phi^+}(T)$ at time $T$ for hierarchy orders $d=2,3,4$. Local optimizer pulses at horizon $T=3.00$ are displayed in the inset.}
    \label{fig:bounds_2q_symmetry_d2_d3_extended}
\end{figure}

\begin{figure}[H]
    \centering
    \includegraphics[width=\linewidth]{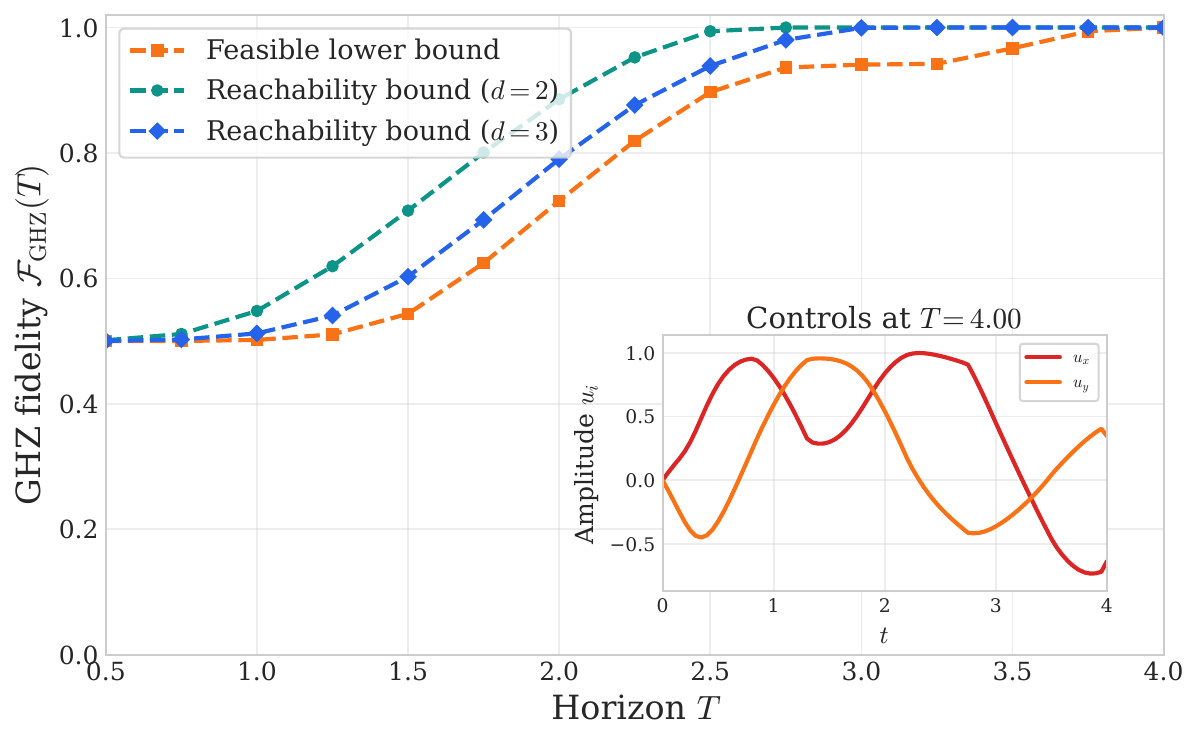}
    \caption{Reachability bounds for the three-qubit GHZ generation, comparing feasible lower bound with the hierarchy upper bound at orders $d=2$ and $d=3$ for different control horizons $T$. The inset shows an illustrative control pulse at $T=4.00$.}\label{fig:bounds_3q_ghz_symmetry_d2_d3}
\end{figure}

For Bell-state preparation, the hierarchy yields nontrivial upper bounds and exclusion regions already at order $d=2$, and the $d=3$ and $d=4$ relaxations substantially tighten the gap to the feasible lower bound, reducing it by more than a factor of two over most intermediate horizons. For the GHZ preparation we likewise find nontrivial bounds at relaxation order $d=2$, and a strong tightening at order $d=3$, where order $d=4$ proved computationally impractical. The rapid growth of the moment matrix with the relaxation order $d$ and the number of dynamical variables is therefore the main practical limitation of the method, and of moment hierarchies for DNPO more generally. 

 Figure \ref{fig:bounds_3q_ghz_symmetry_d2_d3} also shows that the lower bound reaches a plateau before reaching fidelity $\mathcal{F}_{\mathrm{GHZ}}(T)=1$. This is an indicator for a highly complicated optimization landscape that the gradient ascent method struggles to navigate. This interpretation is supported by the good performance of the local optimizer in the other benchmarks, leading us to conjecture that the plateau originates from a complicated optimization landscape, rather than from a limitation of the local method. By employing more sophisticated methods, one could possibly overcome the plateau and improve the lower bound, although here, for consistency and simplicity, we stick to the same gradient method for all applications.

Taken together, the Bell and GHZ benchmarks show that the method can certify genuinely nontrivial finite-time limitations for entangled-state preparation, as well as near-optimality proofs for explicit control pulses, with exact structure reduction playing the key role in turning the hierarchy into a usable computational tool.

\subsection{Gate synthesis}
We now consider the problem of synthesizing a target one-qubit gate $U_\star$ through the time evolution generated by a controlled Hamiltonian $H(t)=H(u(t))$, whose time dependence is governed by continuous control functions 
$u(t)\in W^{1,\infty}([0,T],\mathbb{R})$ subject to experimental amplitude and velocity constraints. This benchmark isolates the role of control geometry: even for a fully controllable one-qubit system, different accessible control directions can lead to different certified finite-time limitations. We ask for the optimal control time to reach a certain threshold $\eta$, where we employ the average gate fidelity $\mathcal{F}_{\mathrm{gate}}(T)$ \cite{Nielsen2002AverageGateFidelity} between the terminal propagator $U(T)$ and the target gate $U_\star$ as the figure of merit
\begin{equation}
\mathcal{F}_{\mathrm{gate}}(T)
=
\frac{|\mathrm{Tr}\left(U_\star^\dagger U(T)\right)|^2+2}{6},
\end{equation}
such that the constraint $\mathcal{F}_{\mathrm{gate}}(T)\geq \eta$ restricts the search space to times $T$ after which fidelity $\eta$ can be attained. In contrast to the state-preparation setting, this function is not naturally the expectation value of a single observable on the system Hilbert space. However, after adequate problem reduction, it matches the problem form \eqref{eq:general_reachability_heisenberg}. The key structural step is to pass from the unitary dynamics on \(SU(2)\) to the induced adjoint action on Pauli operators. Any unitary \(U\in SU(2)\) defines a real orthogonal \(3\times 3\) matrix \(R(U)\) through
\begin{equation}
\label{eq:blochrotation_clean}
\sigma_\alpha(t)=U^\dagger(t)\sigma_\alpha U(t)
=
\sum_{\beta=1}^3 R(U(t))_{\beta\alpha}\,\sigma_\beta.
\end{equation}
The map $U\mapsto R(U)$ is the adjoint representation of $SU(2)$, with image $SO(3)$ and kernel $\pm I$. After quotienting out this physically irrelevant global sign, the gate-synthesis problem can therefore be reformulated exactly in terms of the transfer matrix $R(t):=R(U(t))\in SO(3)$.

In this representation, the gate fidelity becomes
\begin{equation}
\mathcal{F}_{\mathrm{gate}}(T)
=
\frac{\mathrm{Tr}\bigl(R(U_\star)^T R(U(T))\bigr)+3}{6},
\end{equation}
so the fidelity is now linear in the terminal transfer-matrix entries.

To obtain the transfer matrix dynamics, we expand the Hamiltonian as
\begin{equation}
H(t)=\frac{1}{2}\bigl(h_x(t)X+h_y(t)Y+h_z(t)Z\bigr),
\end{equation}
where coefficients $h_\alpha(t)$ contain contributions from both drift and control.
Inserting \eqref{eq:blochrotation_clean} into the Heisenberg equation and matching Pauli coefficients yields
\begin{equation}
\frac{d}{dt}R(t)=R(t)A(t),
\end{equation}
where $A(t)$ is the skew-symmetric matrix
\begin{equation}
A(t)=
\begin{pmatrix}
0 & h_z(t) & -h_y(t) \\
-h_z(t) & 0 & h_x(t) \\
h_y(t) & -h_x(t) & 0
\end{pmatrix}.
\end{equation}

Hence the one-qubit gate-synthesis problem can be formulated exactly in terms of the transfer matrix coordinates \(R_{\alpha\beta}(t)\):
\begin{align}
\small
    &\inf_{\{u_j,\,v_j,\,R_{\alpha\beta}(t)\}}
    \quad T
    \label{eq:1q_gate_optimal_time}\\
    \text{s.t.}\quad
    &\frac{\mathrm{Tr}\bigl(R(U_\star)^T R(U(T))\bigr)+3}{6}\ge \eta,
    \nonumber\\
    &\frac{d}{dt}R(t)=R(t)A(t),
    \qquad t\in[0,T],
    \nonumber\\
    &A(t)=
    \begin{pmatrix}
    0 & h_z(t) & -h_y(t)\\
    -h_z(t) & 0 & h_x(t)\\
    h_y(t) & -h_x(t) & 0
    \end{pmatrix},
    \nonumber\\
    &H(t) = H_0 + \sum_j u_j(t) H_j = \frac{1}{2}\sum_{\alpha\in \{x,y,z\}} h_\alpha \sigma_{\alpha},
    \nonumber\\
    &\frac{du_j(t)}{dt}=v_j(t),
    \qquad j=1,\dots,m,
    \nonumber\\
    &\sum_{j=1}^m u_j(t)^2\le U_{\max}^2,
    \qquad
    \sum_{j=1}^m v_j(t)^2\le V_{\max}^2,
    \nonumber\\
    &u(0)=u_0,
    \qquad
    R(0)=I_3.
    \nonumber
\end{align}
After passing to transfer-matrix coordinates, the one-qubit gate-synthesis problem therefore reduces exactly to a \textit{commutative} polynomial optimization problem with differential constraints. At relaxation order $d$ we  index the moment matrices with commuting monomials in variables $\{\tau, \alpha, u_i(\tau),v_j(\tau),R_{\alpha \beta}(\tau) \}$ up to degree $d$, following Proposition \ref{prop:threshold_time}.

We apply this reduced formulation to two one-qubit control families. The first is a two-axis control model
\begin{equation}
    H_X(t) + H_Y(t) =  \frac{u_x(t)}{2}X + \frac{u_y(t)}{2}Y,
\end{equation}
and the second is a one-axis control model parameterized by an angle \(\theta\),
\begin{equation}
    H_{XY}(\theta,t) = u_{xy}(t)/2\left(\cos(\theta)X + \sin(\theta)Y\right),
\end{equation}
both subject to a fixed drift Hamiltonian
\begin{equation}
    H_0 = \frac{J}{2}Z; \qquad J = 1
\end{equation}
and amplitude- and velocity constraints $U_{\max}=1, V_{\max}=2$.

For each control structure and for target gates \(U_\star\in\{S,X,Y,Z\}\), we bound the optimal time by combining lower bounds on the optimal control time through the hierarchy from Proposition \ref{prop:threshold_time} at  relaxation order $d=3$ and search horizon $\overline T=5.0$, with explicit feasible upper bounds.  Here, 
\begin{equation}
    S=\begin{pmatrix}

1 & 0\\

0 & i

\end{pmatrix}
\end{equation}
denotes the phase gate. The resulting bounds are presented in Table \ref{tab:1q_gate_windows}.

With two independent transverse controls, all four target gates are reached within comparatively short certified windows. In the one-axis family $H_{XY}(\theta)$, the fixed $Z$-drift makes phase gates such as $S$ and $Z$ consistently accessible, with only mild dependence on the transverse control angle. By contrast, the transverse Pauli gates $X$ and $Y$ exhibit substantially broader windows, and in several one-axis cases the fidelity threshold is not reached within the explored horizon $\overline{T}=5.0$. For some entries with $T_{\mathrm{LB}}=5.00$ the SDP was numerically inconclusive, where lower bounds were instead certified via the infeasibility of the objective-free feasibility problem. Moreover, the relative ordering of the $X$- and $Y$-gate windows is inverted with respect to the $\theta$-parametrized control: tuning $\theta$ in the direction of a pure $X$-gate improves the certified window for $Y$, whereas tuning the control towards $Y$  improves the certified window for $X$. These results show that the relevant obstruction is not only whether the system is controllable in principle, but how efficiently a given hardware control geometry can realize a given target on a finite horizon. In particular, already at the one-qubit level, restricting the accessible control axis produces quantitatively different finite-time limitations for different target gates. This illustrates that optimal-time certification captures operational limitations that are directly relevant for gate implementation under hardware constraints.

\begin{table}[t]
\centering
\small
\renewcommand{\arraystretch}{1.1}
\setlength{\tabcolsep}{1pt}
\begin{tabular}{lcccc}
\toprule
\textbf{Control} & $\boldsymbol{S}$ & $\boldsymbol{X}$ & $\boldsymbol{Y}$ & $\boldsymbol{Z}$ \\
\midrule
\textbf{$H_X + H_Y$}
& $[1.27,\,1.30]$
& $[2.92,\,3.17]$
& $[2.92,\,3.17]$
& $[2.40,\,2.48]$ \\

\textbf{$H_{XY}(\pi/2)$}
& $[1.33,\,1.33]$
& $[4.22,\,4.52]$
& $[5.00,>5.00]$
& $[2.90,\,2.90]$ \\

\textbf{$H_{XY}(\pi/3)$}
& $[1.33,\,1.33]$
& $[4.49,\,>5.00]$
& $[5.00,>5.00]$
& $[2.90,\,2.90]$ \\

\textbf{$H_{XY}(\pi/6)$}
& $[1.33,\,1.33]$
& $[5.00,>5.00]$
& $[4.53,\,4.72]$
& $[2.90,\,2.90]$ \\

\textbf{$H_{XY}(0)$}
& $[1.33,\,1.33]$
& $[5.00,>5.00]$
& $[4.22,\,4.52]$
& $[2.90,\,2.90]$ \\
\bottomrule
\end{tabular}

\caption{
Certified gate-synthesis time windows across several one-qubit control families. 
Each entry shows the interval $[T_{\mathrm{LB}},T_{\mathrm{UB}}]$, where $T_{\mathrm{LB}}$ is a lower bound obtained from the cumulative SDP certificate \eqref{eq:sdp_time}, and $T_{\mathrm{UB}}$ is the best feasible upper bound found through explicit pulses.
Entries of the form $[\,\cdot\,, >5.00]$ indicate that the fidelity threshold $F \ge 0.99$ was not reached within the explored horizon $\overline T=5.00$. Similarly, lower bound values $T_{\mathrm{LB}}=5.00$ indicate that the cumulative SDP certified infeasibility until the explored horizon.
}
\label{tab:1q_gate_windows}

\end{table}

\subsection{State transfer}\label{sec:statetransfer}
Finally, we consider finite-time transfer of a single excitation across an \(N\)-qubit \(XX\) chain, from the left boundary site to the right boundary site, under controlled boundary-couplings as modeled by the Hamiltonian
\begin{align}
        H(t)
    &=
    J\sum_{j=2}^{N-2}\frac{X_jX_{j+1}+Y_jY_{j+1}}{2}
    +u_L(t)\frac{X_1X_2+Y_1Y_2}{2}
    \nonumber\\&+u_R(t)\frac{X_{N-1}X_N+Y_{N-1}Y_N}{2},  
\end{align}
whose time dependence is governed by amplitude and velocity constrained control functions \(u_L(t),u_R(t)\in W^{1,\infty}([0,T],\mathbb{R})\) on the outer edges of the chain, as displayed in Figure \ref{fig:chain_schematic}. 

\begin{figure}[h]
    \centering
    \includegraphics[width=\columnwidth,trim=0 6cm 0 10cm,clip]{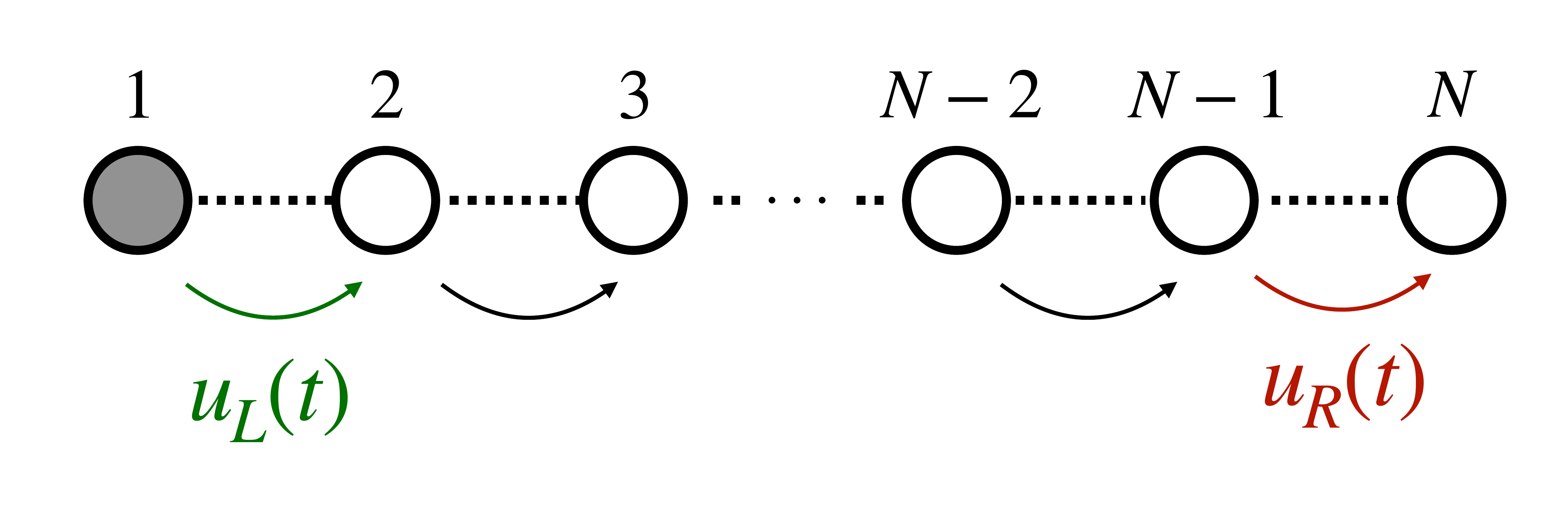}
    \caption{
    Schematic of the reduced single-excitation chain used in the state-transfer benchmark. The two boundary couplings are controlled through the time-dependent amplitudes $u_L(t)$ and $u_R(t)$. The initial excitation is localized at the left end, $\ket{1}$, and the target state is localized to the right end, $\ket{N}$.
    }
    \label{fig:chain_schematic}
\end{figure}
As the dynamics generated by $H(t)$ preserve the total number of excitations in the chain, we work in a localization basis where $\ket{i}$ denotes the state where the $i$-th site is excited:
\begin{equation}
\ket{i}
:=
\ket{
\underbrace{0\cdots 0}_{i-1}
\,1\,
\underbrace{0\cdots 0}_{N-i}
}, \quad i=1,\dots,N.
\end{equation}
This lets us rewrite the Hamiltonian in that basis
\begin{align}
\small
    H(t) =\, & J\sum_{j=2}^{N-2}\ket{j}\bra{j+1}+\ket{j+1}\bra{j} \\
    &+u_L(t)\left(\ket{1}\bra{2} + \ket{2}\bra{1} \right)\nonumber \\
    &\hspace{10pt}+u_R(t)\left(\ket{N-1}\bra{N} + \ket{N}\bra{N-1} 
     \right) , \nonumber
\end{align}
making the conserved excitation number even more apparent.

As a figure of merit for the state transfer we use the transfer fidelity $\mathcal{F}_{\mathrm{tr}}(T)$ to the target state $\ket{N}$ at time $T$,
\begin{equation}
\mathcal{F}_{\mathrm{tr}}(T)=|\braket{N|\psi(T)}|^2.
\end{equation}
The goal is to certify, for given horizons $T$, the largest transfer fidelity achievable by any admissible boundary-control protocol, as well as limits on the optimal control times to reach given target fidelities.

The key exact reduction is the restriction to the single-excitation sector. Because the dynamics preserve excitation number, every reachable state lies in the $N$-dimensional subspace spanned by $\{\ket{i}\}_{i=1}^N$, rather than in the full $2^N$-dimensional Hilbert space. The state-transfer benchmark is therefore no longer formulated over arbitrary operator realizations, but over amplitudes in the reduced basis $\{\ket{i}\}_{i=1}^N$. This exact reduction is what turns the many-body chain into a tractable \textit{commutative} optimization problem under differential constraints. 

Now, to further reduce the complexity of the problem, we exploit the fact that the reduced Hamiltonian only contains nearest-neighbor hopping terms, and therefore couples sites of opposite parity only. We define the diagonal unitary $D$
\begin{equation}
D=\mathrm{diag}(1,i,1,i,\dots).
\end{equation}
Since every nonzero matrix element of $H(t)$ connects an odd site with an even site, conjugation by $D$ multiplies each hopping amplitude by a factor $\pm i$. The transformed Hamiltonian
\begin{equation}
\widetilde H(t):=D^\dagger H(t)\,D
\end{equation}
is therefore purely imaginary. As it remains Hermitian, it can be written as
\begin{equation}
\widetilde H(t)=i\,A(t),
\end{equation}
with $A(t)$ real and skew-symmetric. Writing $\ket{\widetilde\psi(t)}:=D^\dagger \ket{\psi(t)}
$, the Schr\"odinger equation becomes
\begin{equation}
\frac{d}{dt}\ket{\widetilde\psi(t)}=A(t)\ket{\widetilde\psi(t)},
\end{equation}
which is a real linear differential equation. Since $D\ket{1}=\ket{1}$, the full trajectory remains in $\mathbb{R}^N$. Thus the state-transfer problem can be formulated exactly in terms of real amplitudes $\ket{\widetilde{\psi}}\in \mathbb{R}^N$, without altering the original problem. The reduced reachability optimization problem then reads

\begin{align}
\small
    &\sup_{\{u_L,\,u_R,\,v_L,\,v_R,\,\ket{\widetilde\psi(t)}\in \mathbb{R}^N\}}
    \mathcal{F}_{\mathrm{tr}}(T)
    \label{eq:chain_transfer_reachability_real}\\
    \text{s.t.}\quad
    &\frac{d}{dt}\ket{\widetilde\psi(t)}
    =
    A(t)\ket{\widetilde\psi(t)},
    \qquad t\in[0,T],
    \nonumber\\
    &A(t)
    =
    -i\widetilde{H}(t)
    \nonumber\\
    &\frac{du_L(t)}{dt}=v_L(t),
    \quad
    \frac{du_R(t)}{dt}=v_R(t),
    \nonumber\\
    &u_L(t)^2+u_R(t)^2\le U_{\max}^2,
    \quad
    v_L(t)^2+v_R(t)^2\le V_{\max}^2,
    \nonumber\\
    &\ket{\widetilde\psi(0)}=\ket{1},
    \qquad
    u_L(0)=u_R(0)=0,
    \nonumber\\
    &\ket{\widetilde\psi(t)}\in\mathbb{R}^N,
    \qquad
    \mathcal{F}_{\mathrm{tr}}(T)=|\bra{N}{\widetilde\psi(T) \rangle}|^2,
    \nonumber
\end{align}
which suits the form of our framework \eqref{eq:general_reachability_heisenberg}. One can alternatively fix the constraint $\mathcal{F}_{\mathrm{tr}}(T)\geq \eta$ and ask for the optimal control time to reach it, by making use of Proposition \ref{prop:threshold_time}.

We employ illustrative Hamiltonian parameters $J=1, U_{\max}=1,V_{\max}=1.5$. To obtain the order-$d$ relaxation  of this problem we index the moment matrices with words of maximal length $d$ composed of variables $\{\tau,u_L,\,u_R,\,v_L,\,v_R,\,\ket{\widetilde\psi(t)}\}$ and impose the necessary constraints on the level of moments. For the lower time bounds, we also add $\alpha$ to the indexing set.

Figure \ref{fig:bounds_transfer_N4_inset} shows the resulting upper bounds on terminal fidelities for $N=4$ at relaxation orders $d=2,3,4$ against a feasible lower bound computed through explicit control pulses. We find nontrivial bounds on the transfer fidelity and monotonic improvement of the upper bound by increasing the relaxation order for all explored $T$. For both early and late horizons, the gap between the explicit pulse and our upper bounds almost closes, suggesting near-optimality of the explicit control pulses in those regimes. At $T=3.25$ the local solver performs near-optimally, while the optimality gap widens again until $T=4.25$. This behavior indicates the local optimizer terminating in local minima rather than the semidefinite relaxation becoming looser.

\begin{table*}[t]
\centering

\small
\renewcommand{\arraystretch}{1.15}
\setlength{\tabcolsep}{4pt}
\begin{tabular}{lcccl}
\toprule
\textbf{Experiment} & \textbf{Case} & \textbf{Initial block size} & \textbf{Reduced block size } & \textbf{Exact structures exploited} \\
\midrule
\textbf{2Q Bell} 
& $d=2$
& $336$
& $97$
& swap symmetry; triplet invariance; sign parity \\

& $d=3$
& $896$
& $256$
& \\

& $d=4$
& $2016$
& $574$
& \\
\midrule
\textbf{3Q GHZ} 
& $d=2$
& $1344$
& $336$
& permutation symmetry; symmetric spin-$3/2$ \\

& $d=3$
& $3584$
& $896$
&  \\
\midrule
\textbf{1Q gate} 
& $d=3$ & $1344$ & $560$ & $SU(2)\mapsto SO(3)$ reduction, $SO(3)$ quotient\\
\midrule
\textbf{XX chain} 
& $N=4,\ d=2$ & $7168$ & $55$ & single-excitation reduction; real-state gauge\\
& $N=4,\ d=3$ & $21504$ & $220$ &  \\
& $N=4,\ d=4$ & $53760$ & $715$ &  \\
& $N=20,\ d=2$ & $\approx 3.1\cdot 10^{13}$ & $378$ &  \\
\bottomrule
\end{tabular}
\caption{
Effect of structure exploitation on the maximal PSD block size in the different examples.
}
\label{tab:structure_exploitation}
\end{table*}

\begin{figure}[h]
    \centering
    \includegraphics[width=\linewidth]{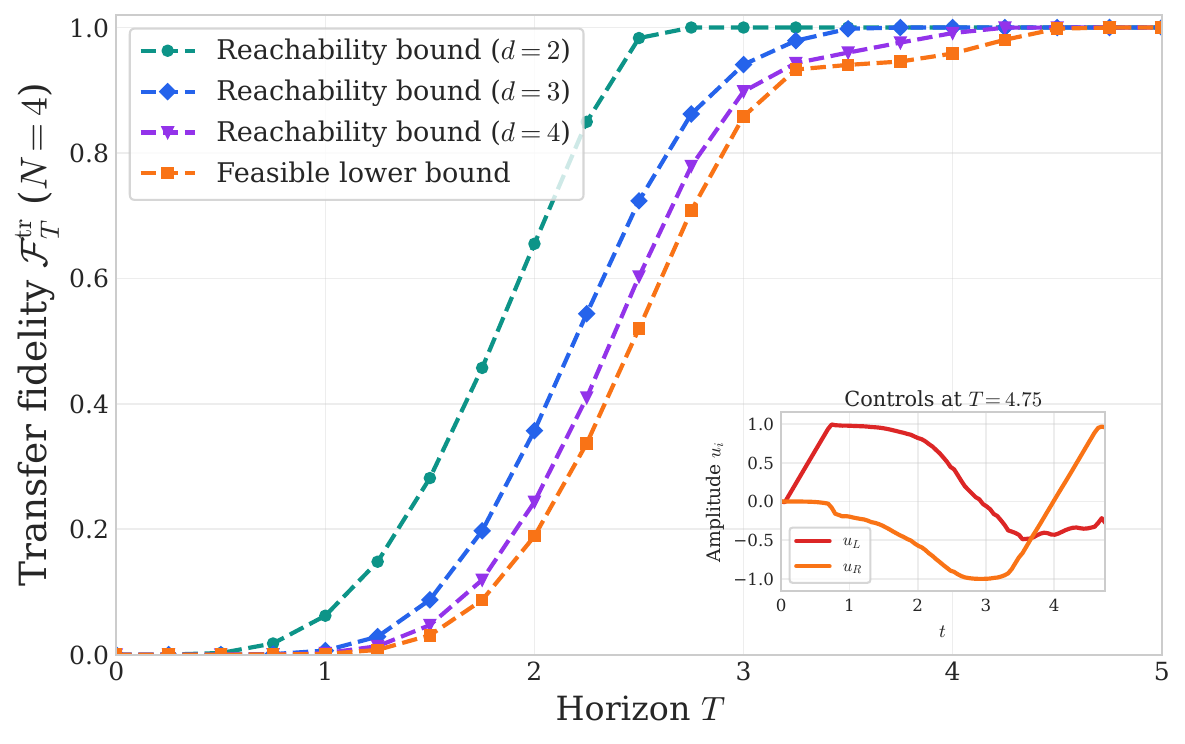}
    \caption{Certified state-transfer fidelity bounds for the spin chain with $N=4$ sites as a function of the horizon $T$. The figure compares the feasible lower bound with hierarchy upper bounds at orders $d=2,3,4$. The inset shows a representative pair of control pulses $(u_L(t),u_R(t))$ for the feasible protocol at $T=4.75$.}
    \label{fig:bounds_transfer_N4_inset}
\end{figure}

\begin{figure}[h]
    \centering
    \includegraphics[width=\linewidth]{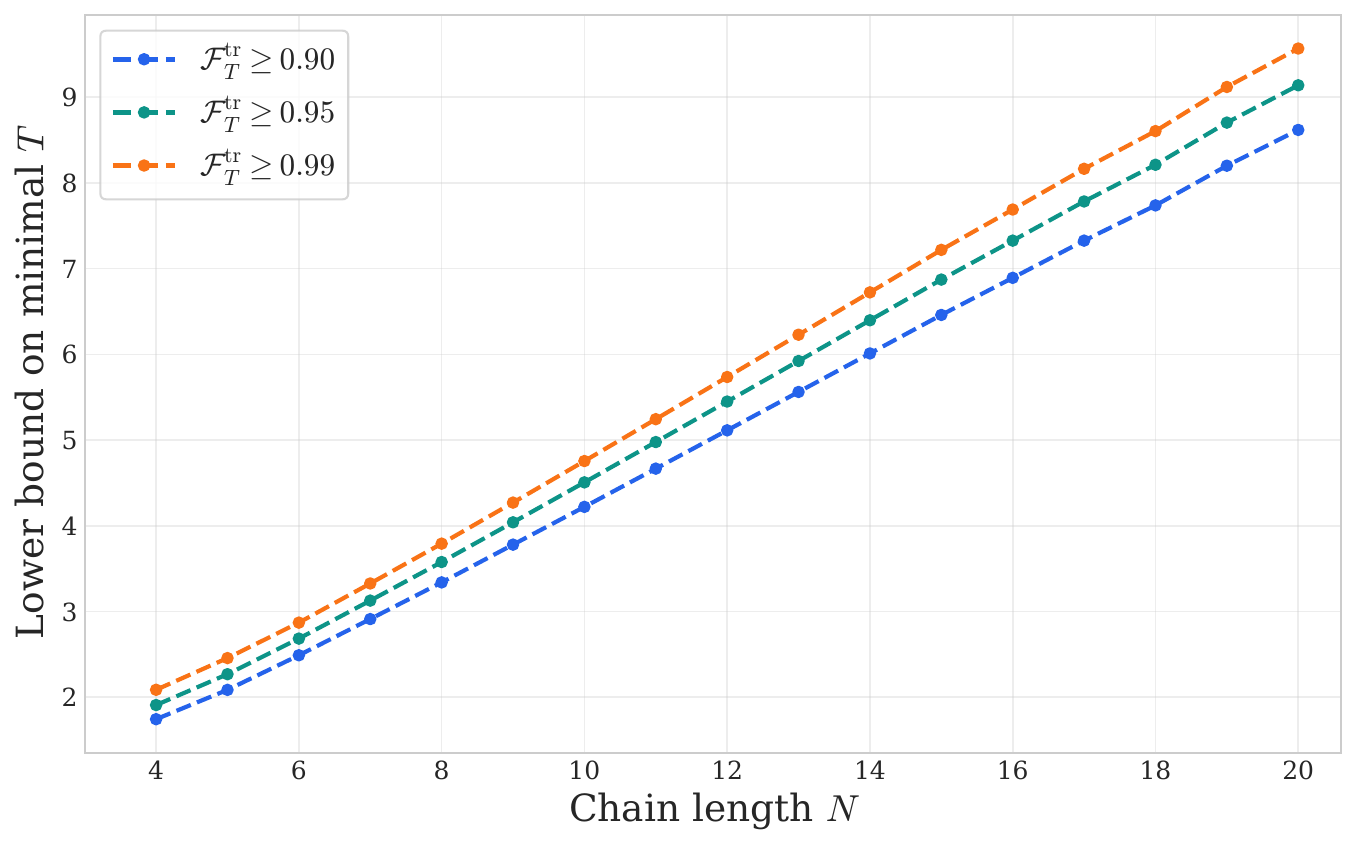}
    \caption{Certified lower bounds on the optimal transfer time $T$ for different target fidelities in the $XX$ chain as a function of the chain length $N=4,\dots,20$, obtained from the reduced cumulative hierarchy \eqref{eq:sdp_time} at relaxation order $d=2$. Each line considers a different fidelity threshold.}
    \label{fig:chain_transfer_T_vs_N}
\end{figure}

Now, by applying the cumulative hierarchy from Proposition \ref{prop:threshold_time}, one can obtain lower bounds on the optimal time $T_\eta^\star$ to reach a desired terminal fidelity $\mathcal{F}_{\mathrm{tr}}=\eta$. To be able to investigate the dependence of the lower bounds on the minimal transfer time $T_\eta^\star$ on $N$ for nontrivial chain lengths $N=4,..,20$, we employ only relaxation order $d=2$ for this experiment. Here, the maximal search horizon $\overline{T}$ is empirically scaled with the chain length, where we observe that the quality of the optimal-time lower bounds $T_\eta^d(\overline{T})$ depends strongly on the choice of $\overline{T}$. This is consistent with the horizon dependence of the cumulative relaxation: increasing \(\overline T\) changes the polynomial constraints through the scaling \(s=\alpha \overline T\). As a result, the SDP may admit additional pseudo-moment trajectories whose projection corresponds to the same physical time \(s\), even though the exact reachable set at \(s\) is independent of \(\overline T\).

As shown in Fig.~\ref{fig:chain_transfer_T_vs_N}, the certified lower bounds on the threshold transfer time increase approximately linearly with the chain length over the explored $N$-range. This trend is consistent with the locality structure of the model; for finite-range quantum spin systems, Lieb--Robinson bounds imply an effective finite propagation velocity \cite{lieb1972finite,nachtergaele2010liebrobinsonboundsquantummanybody}: the influence of an operator initially localized near one end of the chain on an operator localized near the other end is exponentially suppressed outside a light cone whose radius grows linearly in time. Consequently, substantial end-to-end transfer cannot occur in sublinear time as a function of the distance between the endpoints. Thus, a threshold time that grows proportionally to $N$ is the natural behavior to expect. The near-parallelism of the three curves further suggests that, on the sampled range, increasing the target fidelity from $0.90$ to $0.99$ adds a comparatively small extra time on top of this dominant distance-dependent contribution. 

Taken together, these results show that our framework can certify hardware-dependent speed limits for state transfer, including an approximately linear scaling of the threshold time with the chain length $N$ over the explored range.

\section*{Conclusion}
We have developed a certification framework for finite-time constrained quantum control that produces rigorous upper bounds on reachable terminal fidelities and, through an associated SDP relaxation, rigorous lower bounds on the optimal times required to reach them. In contrast to local search methods, these certificates help distinguish genuine dynamical limitations from artifacts of control parametrization or local optimization. From this perspective, the methods developed here not only bound target quantities, but also clarify how control geometry, symmetry, and locality shape the reachable set. This is especially important in constrained finite-time settings, where the failure of a numerical search method cannot by itself be interpreted as evidence of impossibility. At the same time, any remaining gap between upper and lower bounds limits how precisely the true optimum can be identified, since such a gap does not by itself reveal whether it comes from looseness of the SDP upper bound or from the inability of the local solver to find a near-optimal feasible control. More generally, the framework provides a way to assess constrained control architectures not only by constructing pulses that work, but by certifying which fidelities and times are fundamentally excluded.

Moreover, our results show that rigorous finite-time reachability and optimal-time statements in constrained quantum control become practical only when certification methods are combined with exact structure-preserving physical modeling. Across all considered examples, the decisive ingredient in making these bounds computable was the identification of reduced descriptions that preserve the original dynamics while keeping the optimization tractable. In this sense, Table \ref{tab:structure_exploitation} is not merely a numerical summary, but evidence that exploiting exact structure is central to scalable certification in quantum control.

Several promising research directions remain open. An important next direction is to extend the framework to less structured control settings, involving operator-valued controls, where reductions of the kind used here may no longer be available. A second direction is minimizer extraction from these SDP relaxations. In favorable cases, flatness of the moment data may allow direct recovery of optimal controls, while without convergence one could attempt approximate extraction from truncated moments using, for example, occupation-measure support reconstruction \cite{ClaeysDaafouzHenrion2016ModalOccupation} or Christoffel-Darboux semialgebraic recovery \cite{MarxPauwelsWeisserHenrionLasserre2021CD}. The former case would require proving convergence of the controlled DNPO hierarchy, a question that is both of independent interest and nontrivial in view of the technical complexity of the corresponding commutative proofs \cite{henrion2024occupation}.

A further important challenge is robust control; once one moves beyond the nominal setting considered here, uncertainty and model mismatch should be incorporated directly into the certification framework, so that the resulting bounds remain meaningful not only for an idealized model, but also for experimentally relevant imperfect dynamics. Similarly, it would be interesting to investigate the possibility of obtaining reachability bounds on open quantum systems. Finally, it will be important to strengthen the reliability of the numerical output itself by deriving exact certified bounds through rationalization techniques, thereby turning high-precision SDP solutions into rigorous algebraic certificates \cite{naceur2025certifiedboundsoptimizationproblems,cohn2024optimalitysphericalcodesexact,Cafuta2015RationalSO}. 
\vspace{1em}
\section*{Acknowledgements}
The authors would like to thank Victor Magron for insightful discussions. The authors acknowledge the use of GPT 5.5 for assistance with brainstorming ideas and code development. The final content, analysis and conclusions remain the sole responsibility of the authors. This work was granted access to the HPC resources of CALMIP supercomputing center under the allocation 2016-23035. 
 This work has been supported by European Union’s HORIZON–MSCA-2023-DN-JD programme under the Horizon Europe (HORIZON) Marie Skłodowska-Curie Actions, grant agreement 101120296 (TENORS), the European Union Quantera project Veriqtas, the European Union QuantERA project COMPUTE No 101017733, the Government of Spain (Severo Ochoa CEX2019-000910-S, FUNQIP and QEC4QEA PCI2025-163167), the European Union (PASQuanS2.1, 101113690 and QEC4QEA, 101194322), Fundació Cellex, Fundació Mir-Puig, and Generalitat de Catalunya (CERCA program).

\bibliography{references.bib}

\end{document}